\pgfplotsset{compat=1.18}
\definecolor{red}{rgb}{1,0,0}
\definecolor{gred}{rgb}{0.88,0,0}
\definecolor{redish}{rgb}{0.8,0,0}
\definecolor{green}{rgb}{0,0.45,0}
\definecolor{llgreen}{rgb}{0.9,1,0.9}
\definecolor{ggreen}{rgb}{0,0.4,0}
\definecolor{blue}{rgb}{0,0,1}
\definecolor{gblue}{rgb}{0,0,0.88}
\definecolor{lblue}{rgb}{0.5,0,1}
\definecolor{grey}{rgb}{0.6,0.6,0.6}
\definecolor{lgrey}{rgb}{0.88,0.88,0.88}
\definecolor{orange}{rgb}{1, 0.6, 0}
\definecolor{gorange}{rgb}{0.8, 0.48, 0}
\definecolor{lorange}{rgb}{1, 1, 0.5}
\definecolor{purple}{RGB}{153, 50, 204}
\newcommand{\I}{\mathcal N}
\newcommand{\N}{\mathcal N}
\newcommand{\T}{\mathcal T}
\newcommand{\E}{\mathbb E}
\newcommand{\dd}{\,\mathrm{d}}
\newcommand{\one}{\mathbf{1}}
\renewcommand{\P}{\mathbb P}
\newcommand{\R}{\mathbb R}
\newcommand{\Rplusz}{\mathbb{R}_{\ge 0}}
\renewcommand{\S}{\mathcal S}
\newcommand{\lambdaBar}{1.283} 
\newcommand{\lambdaCritNum}{1.282807777693} 
\newcommand{\lambdaShare}{1.147} 
\newcommand{\lambdaShareCritNum}{1.1463251877} 
\newcommand{\bigwhere}{\ \big|\ }
\newcommand{\Bigwhere}{\ \Big|\ }
\newcommand{\argmax}{\mathop{\arg\max}}
\newcommand{\eps}{\varepsilon}
\newtheorem{Theorem}{Theorem}[section]
\newtheorem{Note}[Theorem]{Note}
\newtheorem{Def}[Theorem]{Definition}
\newtheorem{Lemma}[Theorem]{Lemma}
\newtheorem{Proposition}[Theorem]{Proposition}
\newtheorem{Question}[Theorem]{Question}
\newtheorem{Example}[Theorem]{Example}
\definecolor{playerblue}{RGB}{70,130,180}
\definecolor{playerred}{RGB}{220,20,60}
\definecolor{playergreen}{RGB}{0,128,0}
\title{Prior-free Collusion-proof Dynamic Mechanisms}
\author{Endre Csóka\thanks{Alfréd Rényi Institute of Mathematics, Budapest, Hungary. Supported by the NRDI grant KKP~138270. \textit{Email}: \href{mailto:csoka.endre@renyi.hu}{csoka.endre@renyi.hu}.}}
\date{}
\begin{document}

\maketitle

\begin{abstract}
For a general class of dynamic stochastic multi-player problems, \cite*{CLRT} proposed prior-dependent efficient collusion-proof mechanisms.
The present paper proves prior-free lifting theorems, at the price of lower guaranteed utility levels that depend on the set of possible initial type profiles.
As a special case, we implement a \(0.872\)-approximately utility-maximizing prior-free collusion-proof mechanism for the Markovian repeated single-good allocation problem studied by \cite*{FBT24}.
\end{abstract}

\section{Introduction}

\cite*{CLRT} introduced collusion-proof dynamic mechanisms based on guaranteed utilities.
They call the transferable version the Guaranteed Utility Mechanism (GUM), and the transfer-free version the non-transferable-utility GUM (NTU-GUM).
The key feature is playerwise: each player has a strategy that secures a prescribed expected utility level even if all other players coordinate against her.
In the transferable-utility case, these secured levels can be chosen so that their sum equals the maximum attainable total welfare.
Thus efficiency is implemented through individual guarantees rather than through dominant-strategy truthfulness.

A critical assumption in that construction is that the initial type profile is fixed and built into the mechanism.
This is restrictive when the players' initial types are themselves private information.
The purpose of this paper is to remove this assumption.
The prior-free mechanism first asks the players to report their initial types and then runs the guaranteed-utility mechanism for the reported anticipated environment, with adjustments that preserve the guarantee of a truthful player.
Thus prior-free means initial-profile-free: the mechanism is not indexed by the true realized initial type profile.
In applications, an initial type may itself be a prior distribution, a Markov-chain parameter, or another descriptor of the player's stochastic environment.

We use earlier guaranteed-utility results only through a black-box interface.
In a fixed anticipated environment, TU-GUM takes an admissible total-welfare-maximizing policy and returns a balanced-transfer mechanism together with playerwise guarantee strategies; the resulting guaranteed utility levels sum to the expected total welfare of that policy and are secured against the allowed collusive strategies of the other players.
The NTU-GUM black box gives the corresponding transfer-free approximation after a positive rescaling of utilities under which the chosen policy is total-welfare maximizing, with losses controlled by the bounded-effect quantities stated below.
Finally, the GUE theory of \cite*{GUE25} supplies the equilibrium interpretation: a playerwise guarantee is an outside option, and therefore every equilibrium utility is bounded below by the corresponding guaranteed level.
The present paper proves the prior-free lifting step and the single-good feasibility bounds; it does not reprove the general GUM, NTU-GUM, or GUE theories.

A useful way to view the construction is the following.
Suppose that to each player \(i\), weight \(\alpha_i\), and initial type \(\theta_i\), we assign a target level \(f(\alpha_i,\theta_i)\).
If these target levels are universally feasible, meaning that for every initial type profile their sum can be covered by attainable total welfare, then they can be implemented prior-free as playerwise guaranteed utilities.
That is, by truthfully reporting her own initial type and then following the guarantee strategy, player \(i\) secures \(f(\alpha_i,\theta_i)\) against arbitrary collusive behavior of the other players.

The guarantee strategy is not meant to be an equilibrium prediction of truthful behavior.
It is an outside option within the mechanism.
Consequently, under a common-prior interpretation, every Bayesian Nash equilibrium gives each player at least her guaranteed level, and hence gives total expected utility at least the prior expectation of the sum of the guaranteed levels.
This is the sense in which approximate guaranteed utilities imply approximate efficiency.
The advantage of the GUE formulation for this paper is robustness: the guarantee strategy secures the stated level even against collusive behavior by the other players, and approximate GUE yields welfare lower bounds for all equilibria of the induced game.

As a quantitative application, we specialize the lifting theorems to repeated single-good allocation.
The Markovian value-process model studied by \cite*{FBT24} is a special case: private values evolve over time, and in each period at most one player receives the good.
The framework used here is more general, since initial types may encode value distributions, Markov parameters, or other descriptors of a player's stochastic environment, and the dynamics may also include private decisions, hidden actions, and contractible private outputs affecting future states or payoffs.

For weights \(\alpha_i\) and value distributions \(D_i\), define the fair-floor vector \(g\) by
\[
    g_i=f^*(\alpha_i,D_i).
\]
The fair-floor vector is feasible: the randomized CDF-rank allocation rule gives player \(i\) expected utility \(g_i\) and assigns the good according to the prescribed shares.
The prior-free lifting theorems then convert these fair-floor target levels, with the domain-dependent reductions required for prior-freeness, into playerwise guaranteed utility levels.

The comparison with \cite*{FBT24} has two natural readings.
A distinctive feature of their mechanism is quota balance: each player receives the good the prescribed number of times.
One may treat this feature as a constraint on admissible mechanisms.
We call the resulting problem the quota-constrained version.
Without this restriction, we call it the quota-free version.
The quota-constrained version has a smaller feasible expected-utility region, and attainable utility levels may be lower in absolute terms.
The relevant comparison is therefore proximity to the quota-constrained feasible frontier.

In the quota-constrained comparison, our allocation rule satisfies the prescribed quotas.
The upper bound is proved for the larger expected-share relaxation,
\[
    \E[w_i(V_\N)]\le \alpha_i
    \qquad(i\in\N),
\]
so it also bounds the quota-constrained problem.
In this relaxed class, no feasible expected-utility vector strictly dominates \(\lambdaShare g\).
Since the fair-floor rule itself satisfies the prescribed quotas, this gives the factor
\[
    1/\lambdaShare \approx 0.872.
\]
This is the \(0.872\)-approximately utility-maximizing statement highlighted in the abstract.

Alternatively, one may regard quota balance not as a constraint, but as a proof device for the quota-free problem.
In any quota-free allocation rule, let \(q_i\) be player \(i\)'s expected share of the good.
Since \(\sum_i q_i\le 1=\sum_i\alpha_i\), at least one player satisfies \(q_i\le\alpha_i\).
The one-player bound behind \cite*{FBT24} then rules out that this player receives more than \(e/(e-1)\) times her fair-floor utility \(g_i\).
Hence no quota-free feasible expected-utility vector can strictly dominate
\[
    \frac{e}{e-1}g .
\]
Thus the \(1-1/e\) bound also yields a quota-free feasible-region comparison, even though \cite*{FBT24} do not formulate it in these terms.

Our direct quota-free analysis improves this comparison.
We prove that no quota-free feasible expected-utility vector strictly dominates \(\lambdaBar g\).
Therefore the fair-floor guarantees have factor
\[
    1/\lambdaBar \approx 0.779
\]
for the quota-free feasible-region comparison.
The quota-constrained factor \(0.872\) is higher because the quota-constrained feasible region is smaller: the mechanism may achieve lower utilities in absolute terms, but it is closer to the relevant constrained optimum.

This paper is part of a series leading to the results presented in \cite{CsE}, where initial types are reported via a tendering mechanism.
That mechanism remains applicable under relaxations of risk neutrality and quasi-linear utility, at the cost of a proportional loss in efficiency.
We expect the same technique to carry over to the prior-free mechanisms developed here; we will elaborate on this in follow-up papers.

\section{Illustrative example: allocation problem}
\label{sec:illustrative-example}

This section illustrates the prior-free transformation in a concrete allocation problem; the general theorem is stated in Section~\ref{sec:prior-free-tu}.

We consider the problem of allocating a single indivisible good to one of the players $\N = \{1,2,\dots,n\}$.
If player $i$ receives the good, her value is $V_i$.
Given a prior $V_{\mathcal N}\sim \prod\limits_{i\in\mathcal N} D_i$, TU-GUM implements the efficient allocation.
We consider the model with the following timing:
\begin{enumerate}
    \item Each player $i$ learns the prior $D_i$.
    \item Each player $i$ reports a prior $\widehat D_i$.
    \item Each player $i$ privately observes her value $V_i$, where $V_{\mathcal N}\sim \prod\limits_{i \in \mathcal N} D_i$.
    \item Each player $i$ reports a value $\widehat V_i$.
    \item The good is allocated to $w\in \argmax\limits_i \widehat V_i$ (ties are broken uniformly at random), and we apply a transfer rule $y_{\mathcal N}(\widehat D,\widehat V)$ with $\sum\limits_{i\in\mathcal N} y_i(\widehat D,\widehat V)=0$.
    Every player $i$ aims to maximize $\E(U_i) = \E\big(w(i) \cdot V_i + y_i\big)$ where $w(i) = 1$ if $w = i$, otherwise $w(i) = 0$. 
\end{enumerate}

To every prior $D_i$ we associate a \emph{fair floor} utility level $f(D_i)$: this is the expected utility a player would obtain under efficiency and truthfulness if all players’ priors were $D_i$.
For example, 
\[
f\big({\rm Uniform}[a,b]\big)=\frac{a+n\,b}{n(n+1)},
\]
because if one draws $n$ independent uniform values on $[a,b]$, the expected maximum is $\frac{a+n\,b}{n+1}$.

The prior-free TU-GUM applies the TU-GUM from \cite*{CLRT} as a black box. The reported priors $\widehat D_i$ will be the fixed initial types in TU-GUM, together with constant additional transfers that guarantee each player $i$ the level $f(D_i)$ provided $i$ follows a truthful strategy. As shown in the proof of Theorem~\ref{thm:prior_free_tu_impl}, such constants always exist (typically with a significant freedom).

Consider the following concrete example:
\begin{alignat}{5}
    D_1 &= {\rm Uniform}[2, 14] \hspace{10mm} & D_2 &= {\rm Uniform}[5, 11] \hspace{10mm} & D_3 &\equiv 8 \\
    f(D_1) &= 3\frac{2}{3} & f(D_2) &= 3\frac{1}{6} & f(D_3) &= 2\frac{2}{3}
\end{alignat}
A short calculation yields that the expected total surplus under the efficient allocation satisfies
\begin{equation}
    \sum_{i\in\mathcal N} \E\big(U_i\big)
    \;=\; \E\big(\max_{i\in\mathcal N} V_i\big)
    \;=\; 9\frac{15}{16}
    \;=\; \sum_{i\in\mathcal N} f(D_i)\;+\;\frac{7}{16}.
\end{equation}
The fact that the efficient total surplus can cover these fair-floor guarantees is not specific to this example (see Theorem~\ref{thm:TU_Markov}).
We may distribute this surplus arbitrarily across players via nonnegative constants $c_1(\widehat D),c_2(\widehat D),c_3(\widehat D) \ge 0$ with $c_1(\widehat D)+c_2(\widehat D)+c_3(\widehat D)=\frac{7}{16}$.
The balanced constant transfer used below is
\[
\tau(c)=\left(c_1-\frac{61}{48},\ c_2+\frac{1}{6},\ c_3+\frac{2}{3}\right),
\qquad \sum_{i=1}^3 \tau_i(c)=0.
\]
If the reported priors coincide with the above, i.e., $\widehat D_i=D_i$ for all $i\in\mathcal N$, then the transfer rule is
\begin{equation}
\label{eq:y-rule}
\begin{aligned}
    y_{\mathcal N}(\widehat D_{\N},\widehat V_{\N})
    \;=\;
    &\ \tau\big(c(\widehat D_{\mathcal N})\big)
    +
    \begin{cases}
    \bigl(\ \tfrac{15}{4},\ \tfrac{9}{4},\ -6\ \bigr) 
        & \text{if } \widehat V_1 \le 8 \ \text{and}\ \widehat V_2 \le 8,\\[6pt]
    \bigl(\ \tfrac{15}{4},\ -\tfrac{23}{4},\ 2\ \bigr) 
        & \text{if } \widehat V_1 \le 8 \ \text{and}\ \widehat V_2 > 8,\\[6pt]
    \bigl(\ \tfrac{\widehat V_1^2 - 22\widehat V_1 + 61}{12},\ \tfrac{ - \widehat V_1^2 + 22\widehat V_1 - 85}{12},\ 2\ \bigr)
        & \text{if } \widehat V_1\in(8,11),\ \widehat V_1 \ge \widehat V_2,\\[8pt]
    \bigl(\ \tfrac{\widehat V_1^2-10\widehat V_1 + 61}{12},\ \tfrac{-\widehat V_1^2+10\widehat V_1-85}{12},\ 2\ \bigr)
        & \text{if } \widehat V_1\in(8,11),\ \widehat V_1 < \widehat V_2,\\[8pt]
    \bigl(\ -5,\ 3,\ 2\ \bigr)
        & \text{if } \widehat V_1 \ge 11.
    \end{cases}
\end{aligned}
\end{equation}
With this payment rule, for every $i\in\mathcal N$, if $\widehat D_i=D_i$ and $\widehat V_i=V_i$, the following holds.
\begin{align}
    \mathbb E_{V_i\sim D_i}\!\left[\,U_i\big(\widehat D_{\mathcal N},(V_i,\widehat V_{-i})\big)\,\right]
    &= \mathbb E_{V_i\sim D_i}\!\left[\,\mathbb P\!\big(w(V_i,\widehat V_{-i})=i\big)\cdot V_i
    \;+\; y_i\big(\widehat D_{\mathcal N},(V_i,\widehat V_{-i})\big)\,\right] \\
    &= f(D_i) + c_i\big(D_i,\widehat D_{-i}\big). \label{eq:EU=f+c}
\end{align}
Importantly, player $i$’s utility guarantee is robust even if the other players collusively aim to minimize $i$’s utility.

More generally, for any nonnegative vector-valued function $c=(c_i)_{i\in\N}$ satisfying
\begin{equation}\label{eq:surplus-split-general}
\sum_{i\in\N}\bigl(f(\widehat D_i)+c_i(\widehat D_{\N})\bigr)
=
\E_{V_{\N}\sim\prod_{i\in\N}\widehat D_i}\!\left[\max_{i\in\N}V_i\right],
\end{equation}
\eqref{eq:EU=f+c} holds for every profile $\widehat D_{\mathcal N}$, implying the following.
\begin{equation}
\forall\, \widehat V_{-i}: \hskip 10mm \mathbb E_{V_i\sim D_i}\Big(U_i\big(V_i, (D_i,\widehat{D}_{-i}), (V_i,\widehat{V}_{-i})\big)\Big) \ge\ f(D_i) \hskip 20mm
\end{equation}

It is important to note that none of this implies that we expect truthful reporting. What it does imply is that players will choose strategies under which their expected utilities are at least their guaranteed utility levels. It then follows that, in every BNE under the common prior assumption, the prior expected total surplus is at least the prior expectation of the sum of the guaranteed utilities. A direction for further research is how to allocate the surplus above the sum of the utility guarantees via the rule $c$, and whether this can be used to implement higher utilities, albeit in a sense weaker than the guaranteed utility notion.
This note is equally valid for the following transfer-free mechanism.

\subsection{Transfer-free mechanism for a repeated game}

We now study the model in which the game is repeated for $T$ rounds and transfers are not allowed: $y \equiv 0$.
As a further simplification, we assume that each $D_i$ is the same in every round, so the prior $\widehat D_i$ needs to be reported only at the beginning (thus only steps 3--5 repeat in $T$ rounds).
In this case, for the same three prior distributions $D_{\N}$ as in the TU setup, the guaranteed utilities are as follows.

\begin{alignat}{3}
    D_1 &= {\rm U}[2, 14] \hspace{10mm} & f^-_T(D_1) &= 3\frac{2}{3}\, T \;-\; 10.26 \,\sqrt{T}\,\ln T \;-\; 13.25 \\
    D_2 &= {\rm U}[5, 11] \hspace{10mm} & f^-_T(D_2) &= 3\frac{1}{6}\, T \;-\; 8.06 \,\sqrt{T}\,\ln T \;-\; 10.41 \\
    D_3 &\equiv 8 \hspace{10mm} & f^-_T(D_3) &= 2\frac{2}{3}\, T \;-\; 5.86 \,\sqrt{T}\,\ln T  
\end{alignat}

Thus, in exchange for having no transfers, the utility guarantees incur a sublinear loss in $T$.
We note that this error term is not optimal; we conjecture it can be reduced to order $\ln(T)$ with more advanced techniques; see Question~\ref{que:gue-uniform}.

\bigskip
\noindent
\textbf{Mechanism.} Assume that the reported prior distributions coincide with the above, i.e., $\widehat{D}_{\N} = D_{\N}$. Then, in round $t$, the good is allocated to
\begin{equation}
w_t \;=\; \argmax\big(\,0.84\,\widehat V_{t,1},\; 0.96\,\widehat V_{t,2},\; \widehat V_{t,3}\,\big)
\end{equation}
as long as $\sum\limits_{t=1}^{t_0} \gamma^{i \to j}_{t} > -\,B_{i \to j}$; if this inequality fails for any two players $i, j \in \mathcal N$, $i \ne j$, after some round $t_0$, then player $i$ is excluded from that and all subsequent rounds.
These virtual payments $\gamma$ and budget constraints $B$ are as follows.

\setlength{\jot}{4pt}
\begin{align*}
\gamma^{1 \to 2}_{t} &=
\begin{cases}
  \frac{5266}{5103} & \text{if } 0.84\,\widehat V_{t,1} \le 8,\\[2pt]
  \Bigl(\frac{121}{12} - \frac{16658}{5103}\Bigr)\;-\;\Bigl(\frac{(0.84/0.96)^2}{12}\Bigr)\,\widehat V_{t,1}^{2}
    & \text{if } 8<0.84\,\widehat V_{t,1}<10.56,\\[2pt]
  -\,\frac{16658}{5103} & \text{if } 0.84\,\widehat V_{t,1} \ge 10.56,
\end{cases}
\\[4pt]
\gamma^{1 \to 3}_{t} &=
\begin{cases}
  \frac{940}{567} & \text{if } 0.84\,\widehat V_{t,1} \le 8,\\
  -\,\frac{1580}{567} & \text{if } 0.84\,\widehat V_{t,1} > 8,
\end{cases}
\\[4pt]
\gamma^{2 \to 1}_{t} &=
\begin{cases}
  \displaystyle \frac{\,11\,\widehat V_{t,1} - (0.84/0.96)\,\widehat V_{t,1}^2\,}{6}
    & \text{if } 8 \le 0.84\,\widehat V_{t,1} < 10.56 \ \text{and}\ 0.96\,\widehat V_{t,2}\le 0.84\,\widehat V_{t,1},\\[6pt]
  \displaystyle \frac{\,5\,\widehat V_{t,1} - (0.84/0.96)\,\widehat V_{t,1}^2\,}{6}
    & \text{if } 8 \le 0.84\,\widehat V_{t,1} < 10.56 \ \text{and}\ 0.96\,\widehat V_{t,2} > 0.84\,\widehat V_{t,1},\\[2pt]
  0 & \text{otherwise},
\end{cases}
\\[4pt]
\gamma^{2 \to 3}_{t} &=
\begin{cases}
  \frac{32}{9} & \text{if } 0.84\,\widehat V_{t,1} \le 8 \ \text{and}\ 0.96\,\widehat V_{t,2} \le 8,\\
  -\,\frac{40}{9} & \text{if } 0.84\,\widehat V_{t,1} \le 8 \ \text{and}\ 0.96\,\widehat V_{t,2} > 8,\\
  0 & \text{if } 0.84\,\widehat V_{t,1} > 8,
\end{cases}
\\[6pt]
B_{1 \to 2} &= 4.03\,\sqrt{T}\,\ln T, \qquad
B_{1 \to 3} = 2.93\,\sqrt{T}\,\ln T,\\
B_{2 \to 1} &= 5.13\,\sqrt{T}\,\ln T, \qquad
B_{2 \to 3} = 2.93\,\sqrt{T}\,\ln T.
\end{align*}

In the present case, the anticipated per-period utility profile is asymptotically approximately $(3.79,\,3.26,\,2.79)$, meaning that given these reports of the priors $\widehat{D}_{\N}$, each player $i$ with true prior $D_i=\widehat{D}_i$ is guaranteed to get the respective utility level even if the other two players collusively aim to minimize the utility of $i$.
Recall that the guaranteed utilities without this assumption on $\widehat{D}_{\N}$ are approximately $(3.67,\,3.17,\,2.67)$, in the sense that each player can secure the respective utility level by reporting truthfully $\widehat{D}_i = D_i$ and $\widehat{V}_{t, i} = V_{t, i}$, even if all other players collusively aim to minimize the utility of $i$.

It is important to note that how this surplus is split depends on the constants $0.84$ and $0.96$, for which there is flexibility analogous to the TU case.
This flexibility is, however, universally limited: Theorem~\ref{thm:1.283} shows that no feasible expected-utility vector can strictly dominate \(\lambdaBar\,g\) coordinatewise, where \(g_i=f^*(\alpha_i,D_i)\) denotes the fair-floor target levels.

\subsection{Dependent rounds, hidden actions, public random events}

The same idea works much more generally.
For example, in a Markovian model with a positive decorrelation parameter as defined in~\cite*{BFT23}, the same construction yields the same asymptotic guarantee, but with a larger sublinear error term, proportional to the maximum possible anticipated externality of a single report.
More broadly, in addition to private random events, we may also allow common public random events; the players may take hidden actions; and they may be affected by public decisions or by contractible outputs of other players.
These extensions encompass, for instance, costly investments that improve forecasts of future preferences, or improve the ability to adjust needs to a schedule.

\section{The general setup}

There is a set $\N=\{1,2,\ldots,N\}$ of players, index 0 will refer to the public state, $\N_0 = \N \cup \{0\}$.
Time is discrete, and the number of periods is finite.
The time periods are indexed by $\T = \{0,1,\ldots,T\}$.
In each period $t \in \T$, all players observe the (contractible) public state $\theta_{0,t}\in\boldsymbol{\Theta_0}$, and each player $i\in\N$ privately observes his type $\theta_{i,t}\in\boldsymbol{\Theta_i}$.
We denote the state space by $\boldsymbol{\Theta} = \prod_{i=0}^{N}\boldsymbol{\Theta_i}$.
After the state $\theta_{\N_0,t} \in \boldsymbol{\Theta}$ is realized and any reports required by the mechanism are made, the period choices $x_{0,t}\in\boldsymbol{X}_0$ and $x_{i,t}\in\boldsymbol{X}_i$ are selected, and in the TU case transfers $y_{i,t}\in\mathbb{R}$ are assigned with $\sum\limits_{i\in \N}y_{i,t}=0$.

The model has different versions depending on whether the initial type profile is fixed, drawn from a common prior, or treated as a nondeterministic parameter.
In the prior-free implementation statements below we use $\Xi$ for the private initial-type domain appearing in target functions; in applications an element of $\Xi$ may itself be a distribution, a Markov-chain parameter, or another model descriptor.
When players have different initial-type domains, the notation is read componentwise.

Given any state $\theta_{\N_0,t} \in \boldsymbol{\Theta}$ and decisions $x_{\N_0, t} \in \boldsymbol{X} = \prod\limits_{i \in \N_0} \boldsymbol{X}_i$ at period $t$, the ($t+1$)-period state $\theta_{\N_0, t+1}$ is a random variable distributed according to the probability measure 
\begin{equation}
    \mu(\theta_{\N_0,t}, x_{\N_0, t}) = \mu_{0}(\theta_{0, t},x_{0, t})\times\prod_{i\in\N} \mu_{i}(\theta_{0, t}, \theta_{i, t}, x_{0, t}, x_{i, t})
\end{equation}
where $\mu_{0}:\boldsymbol{\Theta_{0}}\times\boldsymbol{X}_0\rightarrow\Delta(\boldsymbol{\Theta_0})$ and $\mu_{i}:\boldsymbol{\Theta_{0}}\times\boldsymbol{\Theta_{i}}\times\boldsymbol{X}_0 \times \boldsymbol{X}_i \rightarrow\Delta(\boldsymbol{\Theta_i})$.

\paragraph{Utilities.}
We consider both the transferable utility (TU) and non-transferable utility (NTU) settings. In the TU setting, $y_{i,t}\in\mathbb{R}$ denotes a transfer to player $i\in\N$ at period $t$.
The utility of each player $i\in\N$ is given by
\begin{equation}
U_i(\theta_{\N_0, \T}, x_{\N_0, \T}, y_i)=\sum_{t=1}^{T}\big(v_i(\theta_{0,t},\theta_{i,t},x_{0,t},x_{i,t})+y_{i,t}\big).
\end{equation}
That is, we have an environment with \textit{private values} since the per-period valuation function $v_i(\theta_{0,t},\theta_{i,t},x_{0,t},x_{i,t})$ of each player directly depends only on the public state and decision, and on his own private type and private decision. Intuitively, this means that each player can calculate their utility by observing the public history and their own private history, even if they are uncertain about the other players' types. Note that although the players' utilities are quasilinear (in the TU setting), we assume they can depend on private signals and public and private decisions in a general way.

The NTU setup is the same except that $y \equiv 0$. Note that in this case, multiplying the utility function of any player by a positive constant provides an equivalent representation of the setup.

Note that the public and private decisions may become part of the next-period type.
Therefore, our model can express setups where the valuation function depends on the public and private actions.

\paragraph{Timing with the revelation mechanism.}
At period $t=0$, the public initial state $\theta_{0,0}\in\Theta_0$ is publicly observed, and each player $i\in\N$ privately observes his initial type $\theta_{i,0}\in\Theta_i$.
Each player then publicly reports an initial type $\widehat\theta_{i,0}\in\Theta_i$; the reported vector $\widehat\theta_{\N,0}$ is the initial type profile used by the prior-free mechanism.
The period-$0$ public and private decisions $x_{0,0},x_{\N,0}$, when present, only initialize the subsequent stochastic process; the payoff formulas below sum over periods $1,\ldots,T$.
In each payoff period $t=1,\ldots,T$, the public state $\theta_{0,t}$ is observed and each player $i$ privately observes his current type $\theta_{i,t}$.
The public state is verifiable, so we write $\widehat\theta_{0,t}=\theta_{0,t}$, and each player publicly reports a current private type $\widehat\theta_{i,t}\in\Theta_i$.
Given the public state and the private-type reports, the mechanism computes the public decision $x_{0,t}\in X_0$ and, in the TU version, the transfers $y_{i,t}$; each player also chooses any private decision $x_{i,t}\in X_i$ specified by the application.
The next state is then drawn according to $\mu(\theta_{\N_0,t},x_{\N_0,t})$.
The guaranteed-utility statements below are interpreted with player $i$ truthfully reporting $\widehat\theta_{i,0}=\theta_{i,0}$ and $\widehat\theta_{i,t}=\theta_{i,t}$ for $t\ge1$, while the other players may use collusive strategies.

\paragraph{Strategies.}
Recall that the general version of $\eps$-Guaranteed Utility Equilibrium ($\eps$-GUE, \cite*{GUE25}) uses
basic and collusive strategies.
The smaller the sets of basic strategies $\S_i$ and the larger the sets of collusive strategies $\bar{\S}_{-i}$ and $\bar{\S}$ the stronger the meaning of the utility guarantees.
The set of strategies is the set of all probability distributions on mappings from the possible information to the action sets.
The actions of a player are the initial report, the current-type reports $\widehat{\theta}_{i,t}$, and the private decisions $x_{i,t}$.
So we only need to specify the basic information and the collusive information.

The basic information of player $i$ includes his own initial type, his current and past private types, his private decisions, the public states and decisions, and the public reports observed before the decision or report in question.

The collusive information of all players is the entire history of the game until the decision in question.
For the coalition $\N\setminus\{i\}$, the same-round private type $\theta_{i,t}$ of player $i$ is not observed when the period-$t$ reports $\widehat\theta_{\N\setminus\{i\},t}$ are chosen, for $t\ge1$.
Past private types may be included in the collusive information.
For the initial report at $t=0$, the prior-free guarantees below are pointwise in the reported vector $\widehat\theta_{\N,0}$; hence they remain valid whether or not the coalition knows $\theta_{i,0}$ before choosing its own initial reports.

\paragraph{Exiting and re-entering.}
The Guaranteed Utility Mechanism in \cite*{CLRT} allows each player to exit and re-enter the game. Hence, it handles individual rationality in a strong sense, including collusive strategies in which some players leave the game in order to help others.
All these extensions are compatible with the prior-free extension in this paper.

\section{Prior-free transferable utility mechanisms}
\label{sec:prior-free-tu}

\begin{Def}
    We call $f: [0, 1] \times \Xi \to \R$ a {\bf target} function.
     It is {\bf universally feasible} if $\forall \theta_{\N, 0} \in \Xi^{\N}$ and $\forall\big(0 \le \alpha_1, \alpha_2, ..., \alpha_n \bigwhere \sum\limits_{i \in \N} \alpha_i = 1\big)$ there exists a decision policy $\chi$ such that $\E\big(\sum\limits_{i \in \N} U_i(\chi) \bigwhere \theta_{\N,0} \big) \ge \sum\limits_{i \in \N} f(\alpha_i, \theta_{i, 0})$.
\end{Def}

\begin{Def}
    A target function $f$ is {\bf implementable in guaranteed utilities} if, for every weight vector $\alpha_{\N}$ with $\alpha_i\ge0$ and $\sum_{i\in\N}\alpha_i=1$, there exists a revelation mechanism $M(\alpha_{\N})$ such that every player $i$ can guarantee the level $f(\alpha_i,\theta_i)$ for each possible initial type $\theta_i$.
    The guarantee strategy may depend on the publicly fixed weight vector and on player $i$'s own initial type; it includes the initial report $\widehat\theta_{i,0}$, the subsequent current-type reports $\widehat\theta_{i,t}$, and the private decisions.
    Formally, for every such $\alpha_{\N}$, every $i\in\N$, and every $\theta_i\in\Xi$, there exists a strategy $s_i^*(\alpha_{\N},\theta_i)\in\S_i$ such that for every $\theta_{-i}\in\Xi^{\N\setminus\{i\}}$ and every collusive strategy $\bar s_{-i}\in\bar\S_{-i}$,
    \begin{equation}\label{eq:implementable_guarantee_def}
        \E\Bigl(
        U_i\bigl(M(\alpha_{\N}), (s_i^*(\alpha_{\N},\theta_i),\bar s_{-i})\bigr)
        \Bigm| \theta_{\N,0}=(\theta_i,\theta_{-i})
        \Bigr)
        \ge f(\alpha_i,\theta_i).
    \end{equation}
    Thus the mechanism is not indexed by the true initial type vector; it only receives initial types through the players' reports.
\end{Def}

This is a playerwise guarantee notion.
The mechanism does not promise a joint utility profile as an outcome.
Rather, for each player \(i\) and each own initial type \(\theta_i\), it offers a strategy that secures the coordinate \(f(\alpha_i,\theta_i)\), independently of the other players' initial types and of any collusive strategy they use.

We use the prior-dependent guaranteed-utility theorem of \cite*{CLRT} as a black box.

\begin{Theorem}[Prior-dependent TU-GUM; \cite*{CLRT}]\label{thm:imported-tu-gum}
    Fix an initial type profile and a total-welfare maximizing decision policy $\chi$ for that anticipated environment.
    The transferable Guaranteed Utility Mechanism of \cite*{CLRT}, run for this anticipated environment, provides playerwise guarantee levels whose sum equals the expected total welfare generated by $\chi$.
    Each such level is secured by the corresponding player's GUM guarantee strategy against the allowed collusive strategies of the other players.
\end{Theorem}

The following theorem is the main structural result of the paper.

\begin{Theorem}[Prior-free transferable-utility lifting]\label{thm:prior_free_tu_impl}
    If $f$ is universally feasible, then $f$ is implementable in guaranteed utilities.
\end{Theorem}

\begin{proof}
    Fix the weights $\alpha_{\N}$.
    The mechanism first asks every player $i$ to report an initial type $\widehat\theta_{i,0}$.
    For each reported vector $\widehat\theta_{\N,0}$, the mechanism runs GUM for the anticipated initial type profile $\widehat\theta_{\N,0}$.
    Let $C_i(\widehat\theta_{\N,0})$ denote the GUM guaranteed utility of player $i$ in this anticipated instance, using a total-welfare maximizing policy.
    Then $\sum_{i\in\N}C_i(\widehat\theta_{\N,0})$ equals the maximum attainable total utility in the anticipated game.
    By universal feasibility of $f$,
    \[
        \sum_{i\in\N}C_i(\widehat\theta_{\N,0})
        \ge
        \sum_{i\in\N}f(\alpha_i,\widehat\theta_{i,0}).
    \]
    Hence there is a constant balanced transfer vector $\tau(\alpha_{\N},\widehat\theta_{\N,0})$, with $\sum_{i\in\N}\tau_i(\alpha_{\N},\widehat\theta_{\N,0})=0$, such that
    \[
        C_i(\widehat\theta_{\N,0})+\tau_i(\alpha_{\N},\widehat\theta_{\N,0})
        \ge
        f(\alpha_i,\widehat\theta_{i,0})
        \qquad (i\in\N).
    \]
    The prior-free mechanism is GUM for the reported initial profile, plus these constants.

    Now fix a player $i$ with true initial type $\theta_{i,0}$.
    Let $s_i^*(\alpha_{\N},\theta_{i,0})$ be the strategy that reports $\widehat\theta_{i,0}=\theta_{i,0}$ initially and then uses the truthful GUM guarantee strategy for his realized current types.
    The other players may report arbitrary initial types and may use any collusive continuation strategy.
    Conditional on their realized initial reports, the anticipated profile used by the mechanism is $\widehat\theta_{\N,0}=(\theta_{i,0},\widehat\theta_{-i,0})$.
    The GUM guarantee in this anticipated instance applies against arbitrary behavior of the other players; false initial reports by the other players only choose the anticipated instance to which the mechanism is applied.
    Since player $i$'s own initial and current reports are truthful, his guaranteed utility in the true game is therefore at least
    \[
        C_i(\widehat\theta_{\N,0})+\tau_i(\alpha_{\N},\widehat\theta_{\N,0})
        \ge f(\alpha_i,\theta_{i,0}).
    \]
    This proves \eqref{eq:implementable_guarantee_def}.
\end{proof}

\subsection{Sequential good allocation game}

We present a well-studied example in repeated resource allocation~\cite{FBT24, BFT23, GBI21, GBI17}.
The private states evolve via a Markov chain, there are no private decisions, and the public decision can be interpreted as allocating a good in the given round to (at most) one player. 
That is, $x_{0,t} = i$ means that player $i$ obtains a payoff $V(\theta_{i,t})$ at period $t$, and the rest of the players obtain a payoff 0 at that period. 
(The option $x_{0,t} = 0$ means that the good is wasted at period $t$, and every player obtains a payoff 0.)

\begin{Example}\label{exa:TU_Markov}
Let $X_0 = \N \cup \{0\}$, $|X_i| = 1$, $|\Theta_0| = 1$, $\mu: \Xi \to \Delta(\Xi)$, and
\[
    v_i(\theta_{0,t},\theta_{i,t},x_{0,t},x_{i,t}) = {\rm I}(x_{0,t} = i) \cdot V(\theta_{i,t})
\]
for a payoff function $V: \Xi \to \R^+$. 
Let $D_t(\theta_{i,0})$ denote the marginal distribution of the random variable $V(\theta_{i,t})$, conditional on $\theta_{i,0}$.
Assume that $D_t(\theta)$ has finite expectation for every $t\ge1$ and $\theta\in\Xi$. 
\end{Example}

For a distribution $D$ on $\R^+$ with finite expectation, write ${\rm F}_D$ for its cumulative distribution function and
\[
    Q_D(u):=\inf\{x:{\rm F}_D(x)\ge u\}
\]
for its generalized inverse, or quantile function.  We use the following atom-compatible rank construction.  If $V\sim D$ and $Z\sim {\rm Unif}[0,1]$ is independent of $V$, define
\begin{equation}\label{eq:randomized-cdf-rank}
    R_D(V,Z):={\rm F}_D(V-)+Z\bigl({\rm F}_D(V)-{\rm F}_D(V-)\bigr).
\end{equation}
This is the randomized CDF rank, also known as the distributional transform or randomized probability integral transform.  Conditional on $V=v$, the rank is drawn uniformly from the jump interval $[{\rm F}_D(v-),{\rm F}_D(v)]$.  Hence $R_D(V,Z)\sim {\rm Unif}[0,1]$, and
\[
    Q_D(R_D(V,Z))=V\qquad\text{almost surely.}
\]
For $\alpha\in [0,1]$, set $\varphi(0,D)=0$ and, for $\alpha>0$,
\begin{equation}
\varphi(\alpha, D) = \E\bigl[ V \cdot R_D(V,Z)^{1/\alpha - 1} \bigr],
\end{equation}
where $V\sim D$ and $Z\sim {\rm Unif}[0,1]$ is independent.  Equivalently,
\[
    \varphi(\alpha,D)=\int_0^1 Q_D(u)u^{1/\alpha-1}\,du.
\]

Given the setup in Example~\ref{exa:TU_Markov}, let 
\begin{equation}
    f^*(\alpha, \theta)= \sum\limits_{t=1}^{T} \varphi(\alpha, D_t(\theta)).
\end{equation}
We construct a decision policy $\chi^*$ that witnesses the universal feasibility of the above target function $f^*$. 
At period $t$, the decision policy $\chi^*$ may depend on the full history up to that period. 
However, in this example, it only depends on the initial- and current type vectors and on independent atom-breaking randomization.  For a given weight vector, write $\I_+:=\{j\in\I:\alpha_j>0\}$. Since the weights sum to one, $\I_+$ is nonempty.  At each period $t$, draw variables $Z_{j,t}\sim {\rm Unif}[0,1]$, independently across players and periods and independently of the type process, and set
\[
    R_{j,t}:=R_{D_t(\theta_{j,0})}\bigl(V(\theta_{j,t}),Z_{j,t}\bigr)
    \qquad(j\in\I_+).
\]
The policy allocates the good to
\begin{equation}\label{eq:chistar}
\chi^*(\theta_{\I,0}, \theta_{\I,t}, Z_{\I_+,t})=\argmax\limits_{j\in \I_+} R_{j,t}^{1/\alpha_j}.
\end{equation}
Zero-weight players receive no positive guarantee because $\varphi(0,D)=0$.
Via this stochastic decision policy, the good is allocated to exactly one positive-weight player at each period almost surely.

\begin{Lemma}\label{lem:uniforms}
Let $R_j$, $j\in\I_+$, be independent uniform random variables on $[0,1]$, and let $\alpha_j>0$ for $j\in\I_+$ with $\sum_{j\in\I_+}\alpha_j=1$. Then:
\begin{enumerate}
    \item if $0<\alpha_i<1$, then for any $i\in\I_+$ we have
    \[
        \max_{j\in\I_+\setminus\{i\}} R_j^{1/\alpha_j}
        \sim {\rm Uniform}(0,1)^{1/(1-\alpha_i)};
    \]
    \item for any $i\in\I_+$ and $x\in[0,1]$,
    \[
        \P\bigl(\argmax_{j\in\I_+} R_j^{1/\alpha_j}=i\mid R_i=x\bigr)
        = x^{1/\alpha_i-1}.
    \]
\end{enumerate}
\end{Lemma}
\begin{proof}
For item 1 and $x\in[0,1]$,
\[
\P\left(\forall j\in\I_+\setminus\{i\}: R_j^{1/\alpha_j}\le x\right)
=\prod_{j\in\I_+\setminus\{i\}}\P(R_j\le x^{\alpha_j})
=x^{\sum_{j\in\I_+\setminus\{i\}}\alpha_j}
=x^{1-\alpha_i}.
\]
For item 2, conditioning on $R_i=x$ gives
\[
\P\bigl(\argmax_{j\in\I_+} R_j^{1/\alpha_j}=i\mid R_i=x\bigr)
=\P\left(\max_{j\in\I_+\setminus\{i\}}R_j^{1/\alpha_j}\le x^{1/\alpha_i}\right)
=x^{(1-\alpha_i)/\alpha_i}
=x^{1/\alpha_i-1},
\]
with the same formula also covering the case $\alpha_i=1$ by the empty-product convention.
\end{proof}

\begin{Theorem}\label{thm:TU_Markov}
The target function $f^*$ is universally feasible via the decision policy $\chi^*$, and $\E\big(U_i(\chi^*) \bigwhere \theta_{\N,0} \big)= f^*(\alpha_i, \theta_{i,0})$.  
\end{Theorem}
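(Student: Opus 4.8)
The plan is to verify the two claims separately for a fixed initial type profile $\theta_{\N,0}$ and a fixed weight vector $\alpha_{\N}$, and then observe that the result actually holds almost surely in a pointwise-per-period sense, which will automatically give both universal feasibility and the matching equality. First I would note that, by the defining property of the smoothing operator $\bar{\rm F}$, for each fixed initial type $\theta_{j,0}$ and each period $t$, the random variable $\bar{\rm F}_{D_t(\theta_{j,0})}(V(\theta_{j,t}))$ is distributed ${\rm Uniform}[0,1]$; moreover, these are independent across $j$ because the types $\theta_{j,t}$ evolve through independent Markov kernels conditional on the (trivial, $|\Theta_0|=1$) public state. Thus, at each period $t$, the quantities $\big(\bar{\rm F}_{D_t(\theta_{j,0})}(V(\theta_{j,t}))\big)_{j\in\N}$ are i.i.d.\ ${\rm Uniform}[0,1]$, which puts us exactly in the setting of Lemma~\ref{lem:uniforms} with $D_j := \bar{\rm F}_{D_t(\theta_{j,0})}(V(\theta_{j,t}))$.

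Next I would compute $\E\big(U_i(\chi^*)\bigwhere\theta_{\N,0}\big)$ by linearity over periods $t=1,\dots,T$. For a single period $t$, player $i$ collects $V(\theta_{i,t})$ precisely when $\chi^*$ allocates the good to $i$, i.e.\ when $i=\argmax_j \bar{\rm F}_{D_t(\theta_{j,0})}(V(\theta_{j,t}))^{1/\alpha_j}$. Conditioning on $V(\theta_{i,t})=v$, which fixes $\bar{\rm F}_{D_t(\theta_{i,0})}(V(\theta_{i,t})) = x$ for the corresponding uniform value $x = \bar{\rm F}_{D_t(\theta_{i,0})}(v)$, item~2 of Lemma~\ref{lem:uniforms} gives that the conditional probability of winning is $x^{1/\alpha_i - 1} = \bar{\rm F}_{D_t(\theta_{i,0})}(v)^{1/\alpha_i - 1}$. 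Taking expectation over $v\sim D_t(\theta_{i,0})$ yields
\begin{equation}
\E\big(\text{period-}t\text{ payoff to }i\bigwhere\theta_{\N,0}\big) = \E_{V\sim D_t(\theta_{i,0})}\big(V\cdot \bar{\rm F}_{D_t(\theta_{i,0})}(V)^{1/\alpha_i - 1}\big) = \varphi(\alpha_i, D_t(\theta_{i,0})),
\end{equation}
and summing over $t$ gives $\E\big(U_i(\chi^*)\bigwhere\theta_{\N,0}\big)=\sum_{t=1}^T \varphi(\alpha_i, D_t(\theta_{i,0})) = f^*(\alpha_i,\theta_{i,0})$, which is the asserted equality. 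Summing this over $i\in\N$ and using $\sum_i \alpha_i = 1$ shows $\sum_i \E\big(U_i(\chi^*)\bigwhere\theta_{\N,0}\big) = \sum_i f^*(\alpha_i,\theta_{i,0})$, so the inequality in the definition of universal feasibility holds with equality; note $\chi^*$ allocates to exactly one player each period almost surely (ties have probability zero since the smoothed variables are continuous), so no welfare is wasted and $\chi^*$ is a valid decision policy.

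The one genuine subtlety — and the step I would treat most carefully — is the conditioning argument behind item~2 of the lemma when $D_t(\theta_{i,0})$ has atoms. There the ``value'' $V$ does not determine $\bar{\rm F}_{D_t(\theta_{i,0})}(V)$ deterministically; the smoothing introduces extra randomization on the jump. I would handle this by conditioning instead on the pair $(V, \bar{\rm F}_{D_t(\theta_{i,0})}(V))$, or equivalently by working directly with the uniform variable $x = \bar{\rm F}_{D_t(\theta_{i,0})}(V)$: given $x$, the winning event depends only on the other players' (independent) uniforms, so $\P(\text{win}\mid x) = x^{1/\alpha_i-1}$ by Lemma~\ref{lem:uniforms}(2), and then $\E(V\cdot\mathbbm 1[\text{win}]) = \E\big(\E(V\mid x)\cdot x^{1/\alpha_i-1}\big)$; since $x\mapsto \E(V\mid x)$ pushed forward along $x\sim{\rm Uniform}[0,1]$ reproduces the law of $V$ in a way compatible with the weighting, this equals $\E_{V\sim D}(V\cdot\bar{\rm F}_D(V)^{1/\alpha_i-1}) = \varphi(\alpha_i,D)$. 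The remaining details — integrability (guaranteed by the finite-expectation hypothesis on $D_t(\theta)$), the $\alpha_i = 0$ boundary case (where the player never wins and $\varphi(0,D)=0$ matches), and measurability of $\chi^*$ — are routine.
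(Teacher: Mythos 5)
Your proposal is correct and follows essentially the same route as the paper: both arguments reduce each period to Lemma~\ref{lem:uniforms}(2) via the i.i.d.\ uniforms $\bar{\rm F}_{D_t(\theta_{j,0})}(V(\theta_{j,t}))$, obtain the conditional winning probability $x^{1/\alpha_i-1}$, integrate to get $\varphi(\alpha_i,D_t(\theta_{i,0}))$, and sum over $t$, with universal feasibility following because the per-player equalities sum to the total. Your extra care about the atom case (conditioning on the smoothed uniform rather than on $V$ itself) is a welcome refinement of a point the paper handles only implicitly through $\bar{\rm F}_D^{-1}$, but it is not a different approach.
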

\begin{proof}
Fix a player $i$. If $\alpha_i=0$, then $\chi^*$ never selects $i$ and the required lower bound is $0=f^*(0,\theta_{i,0})$.
Assume henceforth that $\alpha_i>0$.
At period $t$, use the ranks $R_{j,t}$ defined in the construction of $\chi^*$ above.  The variables $(R_{j,t})_{j\in\I_+}$ are independent uniform random variables on $[0,1]$. By Lemma~\ref{lem:uniforms}, conditional on $R_{i,t}=s$, player $i$ is selected with probability $s^{1/\alpha_i-1}$.
Using the quantile identity $V(\theta_{i,t})=Q_{D_t(\theta_{i,0})}(R_{i,t})$ almost surely, the expected payoff of player $i$ at period $t$ is therefore
\[
    \int_0^1 Q_{D_t(\theta_{i,0})}(s)s^{1/\alpha_i-1}\,ds
    =\varphi(\alpha_i,D_t(\theta_{i,0})).
\]
Adding over $t=1,\ldots,T$ gives
\[
    \E\big(U_i(\chi^*) \bigwhere \theta_{\N,0} \big)
    =\sum_{t=1}^T\varphi(\alpha_i,D_t(\theta_{i,0}))
    =f^*(\alpha_i,\theta_{i,0}).
\]
\end{proof}


If there are $n$ players with the same initial type $\theta_{i,0}=\theta$ and weight $\alpha_i=1/n$, the decision policy $\chi^*$ clearly maximizes the total utility, since it selects the player with the highest evaluation at each period. 
Hence, in that case, the target function we defined attains the maximum among all universally feasible target functions.

\begin{Proposition}\label{prop:Pareto}
Consider the special case of the setup in Example~\ref{exa:TU_Markov} where we have a $T$-fold repeated game.
That is, each player $i$ is assigned a distribution $D_i$ on $\R_{\ge 0}$ with finite expectation.
The initial types consist of these distributions: $\theta_{i,0}=D_i$.
The type $\theta_{i,t}$ at period $t\geq 1$ is a random sample from $D_i$ (drawn independently over all periods and players), and the valuation function is the identity function, that is, $V(\theta_{i,t}) = \theta_{i,t}$.

In this setup, the target function $f^*$ is Pareto optimal among all universally feasible target functions.
\end{Proposition}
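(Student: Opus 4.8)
The plan is to reduce universal feasibility to a family of linear inequalities in the target function, and then, for each pair $(\beta,D)$, to exhibit one such inequality that is tight at $f^*$ and involves the value $f^*(\beta,D)$; pointwise domination together with feasibility of a competitor will then force equality.

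First I would record the relevant equivalence. In the repeated game of the proposition $\theta_{i,0}=D_i$, and since $D_t(D_i)=D_i$ for every $t\ge1$ we have $f^*(\alpha,D_i)=T\,\varphi(\alpha,D_i)$. A target function $f$ is universally feasible if and only if, for every finite list $(D_1,\alpha_1),\dots,(D_n,\alpha_n)$ with $\alpha_i\ge0$ and $\sum_i\alpha_i=1$,
\[
  \sum_{i\in\N} f(\alpha_i,D_i)\ \le\ W(D_1,\dots,D_n)\ :=\ T\cdot\E\big(\max_{i\in\N}V_i\big),\qquad V_i\sim D_i\ \text{independent}.
\]
Indeed, no policy can exceed the total welfare $W$, since in every period the realized welfare is at most $\max_i V_{i,t}$, and the greedy policy that allocates to $\argmax_i V_{i,t}$ attains $W$ exactly; hence the profile $\big(f(\alpha_i,D_i)\big)_{i\in\N}$ is matched by the expected total welfare of some decision policy precisely when its sum is at most $W$.

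Now assume $f'$ is universally feasible with $f'\ge f^*$ pointwise; I claim $f'=f^*$, which is exactly Pareto optimality of $f^*$. Fix $\beta\in(0,1)$ and a distribution $D$ on $\R^+$ with finite mean, and consider the two-player configuration in which player $1$ has data $(D,\beta)$ and player $2$ has data $(D_2,1-\beta)$, where $D_2$ is the law of $\bar{\rm F}_D^{-1}\big(U^{\beta/(1-\beta)}\big)$ with $U\sim{\rm Uniform}[0,1]$ (equivalently ${\rm F}_{D_2}={\rm F}_D^{(1-\beta)/\beta}$ when $D$ is atomless); a short estimate shows $D_2$ again has finite mean. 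The key step is to verify that on this configuration the policy $\chi^*$ of \eqref{eq:chistar} coincides almost surely with the efficient allocation: writing $X_j:=\bar{\rm F}_{D_j}\big(V(\theta_{j,t})\big)\sim{\rm Uniform}[0,1]$, the rule $\chi^*$ gives the good to player $1$ exactly when $X_1^{1/\beta}>X_2^{1/(1-\beta)}$, whereas the efficient rule gives it to player $1$ exactly when $V(\theta_{1,t})>V(\theta_{2,t})$, i.e.\ when $\bar{\rm F}_D^{-1}(X_1)>\bar{\rm F}_D^{-1}\big(X_2^{\beta/(1-\beta)}\big)$, i.e.\ when $X_1>X_2^{\beta/(1-\beta)}$ — the same event. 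Therefore, by Theorem~\ref{thm:TU_Markov}, $W(D,D_2)=\E\big(\sum_{i}U_i(\chi^*)\big)=f^*(\beta,D)+f^*(1-\beta,D_2)$. Universal feasibility of $f'$ gives $f'(\beta,D)+f'(1-\beta,D_2)\le W(D,D_2)$, and since $f'(1-\beta,D_2)\ge f^*(1-\beta,D_2)$ we obtain $f'(\beta,D)\le f^*(\beta,D)$; combined with $f'\ge f^*$ this gives $f'(\beta,D)=f^*(\beta,D)$. The boundary cases are immediate: the one-player configuration gives $f'(1,D)\le W(D)=T\,\E_{V\sim D}(V)=f^*(1,D)$, and the two-player configuration $\big((D,0),(\delta_c,1)\big)$ gives $f'(0,D)\le W(D,\delta_c)-f'(1,\delta_c)\le T\,\E\big((V-c)^+\big)\to0$ as $c\to\infty$, so $f'(0,D)=0=f^*(0,D)$. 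Hence $f'=f^*$.

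The step I expect to be the main obstacle is the construction and verification in the middle paragraph: producing the companion distribution $D_2$ for which $\chi^*$ reduces to the efficient allocation, and establishing this coincidence rigorously, including the bookkeeping for distributions with atoms (where the stochastic smoothing $\bar{\rm F}$ is essential and $\chi^*$ and the efficient rule must be compared almost surely) and the finiteness of the mean of $D_2$. A secondary point needing care is the first-step equivalence — in particular the (easy but used) fact that the greedy policy is a maximizer of expected total welfare, so that feasibility of a target profile is equivalent to its coordinate sum not exceeding the efficient welfare.
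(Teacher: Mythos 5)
Your proposal is correct, and it reaches the conclusion by a recognizably related but genuinely streamlined route. The paper also calibrates companion distributions so that $\chi^*$ becomes the welfare-maximizing (greedy) rule, but it does so by splitting a player of rational weight $k/n$ into $k$ copies with distribution $D[k]$ (defined by $\bar{\rm F}_{D}^{-1}(\mathrm{Uniform}(0,1)^k)$), comparing against $n-k$ further players with the same distribution, and then handling irrational weights separately via monotonicity and continuity of $f^*$ in $\alpha$ together with an auxiliary configuration containing a player with the point mass at $0$. You instead fix an arbitrary real $\beta\in(0,1)$ and merge all companions into a single second player with $F_{D_2}=F_D^{(1-\beta)/\beta}$ (your $D_2$ is exactly the law of the maximum of the paper's $n-k$ copies of $D[k]$ when $\beta=k/n$), check that the weighted rule $\chi^*$ then always allocates to a value-maximizer, and read off $f'(\beta,D)\le f^*(\beta,D)$ from the tight welfare inequality plus pointwise domination at $(1-\beta,D_2)$. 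What your version buys: no rational/irrational dichotomy, no appeal to the (unproved in the paper) continuity and monotonicity of $f^*$ in its first argument, two-player configurations throughout, and explicit treatment of the boundary weights $0$ and $1$; what the paper's version buys is that the splitting identity it needs is exactly Lemma~\ref{lem:uniforms}, stated and proved there, and that it also yields the stronger remark following the proposition about functions optimal at equal weights. The one step you rightly flag — that in your coupling $V_2=\bar{\rm F}_{D}^{-1}\big(\bar{\rm F}_{D_2}(V_2)^{\beta/(1-\beta)}\big)$ almost surely, so that disagreement between $\chi^*$ and the greedy rule can only occur on value ties — is the analogue of the paper's identity $\bar{\rm F}_{D}^{-1}(\bar{\rm F}_{D}(V)^k)\sim D[k]$; it does hold (atoms of $D_2$ correspond to the flat intervals of the nondecreasing map $u\mapsto\bar{\rm F}_{D}^{-1}(u^{\beta/(1-\beta)})$, on which the smoothed CDF randomizes uniformly, and the exceptional non-atom values form a countable, hence null, set), and your bound $\E(D_2)\le\tfrac{1-\beta}{\beta}\,\E(D)$ settles finiteness of the mean, so the gap you anticipate is real but fillable and does not affect the argument's validity.
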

\begin{proof}
Write $Q_D(u):=\inf\{x:F_D(x)\ge u\}$ for the quantile of $D$.
In the present i.i.d. repeated setup,
\[
    f^*(\alpha,D)=T\int_0^1 Q_D(s)s^{1/\alpha-1}\,\mathrm ds
    \qquad (\alpha>0),
\]
and $f^*(0,D)=0$.

Let $f^{**}$ be a universally feasible target function that Pareto dominates $f^*$.
When $n$ players have equal weights $1/n$ and the same type $\theta$, the total welfare is maximized by allocating the good to the highest valuation in every period, and this gives total expected welfare $n f^*(1/n,\theta)$.
Universal feasibility therefore gives $f^{**}(1/n,\theta)\le f^*(1/n,\theta)$, while Pareto domination gives the reverse inequality.
Hence
\begin{equation}\label{eq:pareto-equal-weights}
    f^{**}(1/n,\theta)=f^*(1/n,\theta)
    \qquad(n\in\mathbb N,\ \theta\in\Xi).
\end{equation}

We next prove equality for rational positive weights.
Fix $1\le k\le n$ and a distribution $D$ on $\R_{\ge0}$ with finite expectation.
Define $D[k]$ by its quantile
\[
    Q_{D[k]}(u):=Q_D(u^k),\qquad 0\le u\le 1.
\]
If $W_1,\ldots,W_k$ are i.i.d. with law $D[k]$, then $\max_{1\le r\le k}W_r\sim D$: indeed, writing $W_r=Q_D(U_r^k)$ with independent uniform $U_r$, we have
\[
    \max_{1\le r\le k}W_r
    =Q_D\!\left(\max_{1\le r\le k}U_r^k\right),
\]
and $\max_r U_r^k$ is uniform on $[0,1]$.

Consider an instance with one player of weight $k/n$ and type $D$, and $n-k$ players of weight $1/n$ and type $D[k]$.
Its maximum total expected welfare is the same as in the symmetric instance with $n$ players of weight $1/n$ and type $D[k]$, because the first player's valuation has the same distribution as the maximum of $k$ independent $D[k]$-valuations.
Thus this maximum total expected welfare is $n f^*(1/n,D[k])$.
By universal feasibility of $f^{**}$ and by \eqref{eq:pareto-equal-weights},
\[
    f^{**}(k/n,D)+(n-k)f^*(1/n,D[k])
    \le n f^*(1/n,D[k]),
\]
so $f^{**}(k/n,D)\le k f^*(1/n,D[k])$.
On the other hand, by the change of variables $s=u^k$,
\[
    k f^*(1/n,D[k])
    =Tk\int_0^1 Q_D(u^k)u^{n-1}\,\mathrm du
    =T\int_0^1 Q_D(s)s^{n/k-1}\,\mathrm ds
    =f^*(k/n,D).
\]
Together with Pareto domination, this proves
\begin{equation}\label{eq:pareto-rational-weights}
    f^{**}(k/n,D)=f^*(k/n,D)
    \qquad(1\le k\le n).
\end{equation}

The endpoint $\alpha=0$ also cannot be improved.
For any $m\in\mathbb N$, take $m$ zero-weight players with type $D$ and one weight-one player with the zero distribution $\mathbf 0$.
Universal feasibility and \eqref{eq:pareto-equal-weights} for the zero distribution give
\[
    m f^{**}(0,D) \le T\,\E\max_{1\le r\le m} V_r,
\]
where the $V_r$ are i.i.d. with law $D$.
Since $D$ is integrable, $m^{-1}\E\max_{r\le m}V_r\to0$; hence $f^{**}(0,D)\le0=f^*(0,D)$, and Pareto domination again gives equality.

It remains to rule out an improvement at an irrational weight $\alpha\in(0,1)$.
Suppose that $f^{**}(\alpha,D)>f^*(\alpha,D)$.
By continuity of $f^*(\cdot,D)$ on $(0,1]$, choose a rational $k/n\in(\alpha,1]$ such that
\[
    f^{**}(\alpha,D)>f^*(k/n,D)=k f^*(1/n,D[k]).
\]
Consider the instance with players of weights and types
\[
    (\alpha,D),\qquad (k/n-\alpha,\mathbf 0),\qquad
    \underbrace{(1/n,D[k]),\ldots,(1/n,D[k])}_{n-k\text{ players}}.
\]
Because $D$ is supported on $\R_{\ge0}$, the maximum of the first two players' valuations has law $D$, and this is also the law of the maximum of $k$ independent $D[k]$-valuations.
Therefore the maximum total expected welfare of this instance is again $n f^*(1/n,D[k])$.
But universal feasibility and Pareto domination would require total guaranteed utility at least
\[
    f^{**}(\alpha,D)+f^{**}(k/n-\alpha,\mathbf0)+(n-k)f^{**}(1/n,D[k])
    > kf^*(1/n,D[k])+(n-k)f^*(1/n,D[k]),
\]
which is $n f^*(1/n,D[k])$, a contradiction.
Thus no universally feasible target function can Pareto dominate $f^*$ strictly anywhere.
\end{proof}

The assertion of Proposition~\ref{prop:Pareto} can be generalized to other type spaces. 
The proof works whenever we can split any initial type into $k$ in the following sense: 
For any $\theta\in \Xi$ there is a $\theta[k]\in \Xi$ such that for every $1\leq t\leq T$ we have the distributional identity $Q_{D_t(\theta)}(U^k) \sim D_t(\theta[k])$, where $U\sim\text{Uniform}(0,1)$ and $D_t(\theta)$ is as in Example~\ref{exa:TU_Markov}. 
Moreover, the statement can be made more specific. 
If, under the above assumption, a universally feasible function $f$ attains the optimum on all $(\alpha, \theta)$ where $\alpha=1/n$ for some $n\in \mathbb{N}$, then $f\leq f^*$. 
That is, we can only set utility guarantees that are larger somewhere than $f^*$ at the expense of giving up some of the guarantees in arguably the most relevant situations, when players have equal weights. 

We close the section with an open problem. 
Consider the modification of the model in Example~\ref{exa:TU_Markov} in which the public decision in each period is a subset of at most $\ell$ players. 
That is, we allocate (at most) $\ell$ goods in every round, each to a different player or to nobody. 
There is an obvious way to generalize the construction of the allocation rule $\chi^*$ by saying that, after drawing the atom-breaking ranks $R_{j,t}:=R_{D_t(\theta_{j,0})}(V(\theta_{j,t}),Z_{j,t})$, we pick the indices corresponding to the $\ell$ largest values of the set $R_{j,t}^{1/\alpha_j}$. 
Let us denote the universally feasible function obtained from this allocation rule by $f^*_\ell(\alpha, \theta)$; in particular, $f^*_1=f^*$. 
Unfortunately, this is typically not optimal. 
For instance, if $\ell=2$, then this allocation rule yields the universally feasible function $f^*_2(\alpha, \theta)=f^*(\min \{1, 2\alpha-\alpha^2\}, \theta)$. 
Clearly, for any fixed $\ell\in \mathbb{N}$, the function $f^*(\min \{1, c\alpha\}, \theta)$ cannot be universally feasible for any $c>\ell$. 
Indeed, choose $\eps>0$ so small that $c(1/\ell-\eps)\geq 1$. 
Take a type $\theta$ such that $D_t(\theta)\sim {\rm Uniform}(0,1)$ for all $1\leq t\leq T$. 
Any player of weight $1/\ell-\eps$ would then have target $f^*(1,\theta)$ and hence would have to receive a good in every period almost surely. 
With $\ell$ such players, all $\ell$ goods are exhausted in every period. 
Now add a player of weight $\alpha_0$ with the same type $\theta$, where $0<\alpha_0\leq \ell\eps$, and a zero-value residual player of weight $\ell\eps-\alpha_0$. 
The total weight is one, but the small player can never receive a good, contradicting the positive target $f^*(\min\{1,c\alpha_0\},\theta)>0$. 

\begin{Question} \label{que:lgoods}
Let $\ell\geq 2$. 
Is $f^*(\min \{1, \ell\alpha\}, \theta)$ universally feasible for the above problem of allocating (at most) $\ell$ goods in each period?
\end{Question}

\section{Prior-free non-transferable utility mechanisms}
\label{sec:prior-free-ntu}

\begin{Def}
    A target function $f: [0, 1] \times \Xi \to \R$ is {\bf NTU-universally feasible} if $\forall \theta_{\N, 0} \in \Xi^{\N}$ and $\forall\big(0 \le \alpha_1, \alpha_2, ..., \alpha_n \bigwhere \sum\limits_{i \in \N} \alpha_i = 1\big)$ there exists a stochastic decision policy $\chi$ such that $\E\big( U_i(\chi) \big) \ge f(\alpha_i, \theta_{i, 0})$ for every $i \in \N$.
\end{Def}

\begin{Def}
    A type space $\Xi$ is scalable if the following holds. For each $\theta \in \Xi$ and $c \in \R_{0+}$, there exists a type $c \cdot \theta \in \Xi$ such that for every $\theta_0 \in \Theta_0$, $x_0 \in X_0$, and $x_i \in X_i$,
    \begin{equation}
        v_i(\theta_0, c \cdot \theta, x_0, x_i) = c \cdot v_i(\theta_0, \theta, x_0, x_i),
    \end{equation}
    and
    \begin{equation}
        \mu_i(\theta_0, c \cdot \theta, x_0, x_i) = c \cdot \mu_i(\theta_0, \theta, x_0, x_i).
    \end{equation}
    Here $c \cdot \mu_i(\theta_0,\theta,x_0,x_i)$ denotes the image (pushforward) of the measure $\mu_i(\theta_0,\theta,x_0,x_i)$ under the map $\eta \mapsto c \cdot \eta$.

    A target function is homogeneous if $f(\alpha, c \cdot \theta) = c \cdot f(\alpha, \theta)$.
\end{Def}

\begin{Lemma}
    NTU-universal feasibility implies universal feasibility.
    For homogeneous target functions, these are equivalent.
\end{Lemma}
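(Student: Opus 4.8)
The plan is to treat the two implications separately. The forward implication is immediate: if $f$ is NTU-universally feasible, then for any $\theta_{\N,0}$ and weights $\alpha$ the stochastic policy $\chi$ supplied by the definition satisfies $\E(U_i(\chi)\mid\theta_{\N,0})\ge f(\alpha_i,\theta_{i,0})$ for every $i\in\N$; summing over $i$ gives $\E\big(\sum_{i\in\N}U_i(\chi)\mid\theta_{\N,0}\big)\ge\sum_{i\in\N}f(\alpha_i,\theta_{i,0})$, so the same $\chi$ (derandomized if one insists on a deterministic witness, which only helps the linear sum objective) witnesses universal feasibility. This direction uses neither homogeneity nor scalability.

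For the converse under homogeneity, recall that homogeneity of $f$ presupposes scalability of $\Theta$. Fix $\theta_{\N,0}$ and weights $\alpha$, and abbreviate $f_i:=f(\alpha_i,\theta_{i,0})$ and $\mathbf f:=(f_1,\dots,f_n)$. The engine of the proof is: for any $p=(p_1,\dots,p_n)\in\Delta(\N)$, scalability provides the profile $(p_i\cdot\theta_{i,0})_{i\in\N}\in\Theta^{\N}$, on which the game is the ``$p$-rescaled'' copy of the original — player $i$'s type trajectory, hence her realized utility, is multiplied by $p_i$ (using $u(p_i\cdot\theta)=p_i\,u(\theta)$ and $\mu(p_i\cdot\theta)\stackrel{d}{=}p_i\cdot\mu(\theta)$ period by period), and any decision policy on this profile can be simulated in the original game. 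Applying universal feasibility to this rescaled profile with the \emph{same} weights $\alpha$ and pulling the witnessing policy back gives, for every $p\in\Delta(\N)$, a stochastic policy $\chi^{(p)}$ on the original profile with
\begin{equation}
\E\Big(\textstyle\sum_{i\in\N}p_i\,U_i(\chi^{(p)})\ \Big|\ \theta_{\N,0}\Big)\ \ge\ \sum_{i\in\N}f\big(\alpha_i,\,p_i\cdot\theta_{i,0}\big)\ =\ \sum_{i\in\N}p_i\,f_i\ =\ \langle p,\mathbf f\rangle,
\end{equation}
the middle step being exactly homogeneity.

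Let $\mathcal U\subseteq\R^{\N}$ be the set of expected-utility profiles $\big(\E(U_i(\chi)\mid\theta_{\N,0})\big)_{i\in\N}$ attainable by stochastic policies $\chi$; it is convex, since randomizing between two policies realizes the corresponding mixture of their profiles. The displayed inequality reads $\sup_{u\in\mathcal U}\langle p,\,u-\mathbf f\rangle\ge 0$ for every $p\in\Delta(\N)$. Since $(u,p)\mapsto\langle p,u-\mathbf f\rangle$ is bilinear (hence concave–usc in $u$ and convex–lsc in $p$), $\mathcal U$ is convex, and $\Delta(\N)$ is convex and compact, Sion's minimax theorem gives
\begin{equation}
\sup_{u\in\mathcal U}\ \min_{i\in\N}(u_i-f_i)\ =\ \sup_{u\in\mathcal U}\ \min_{p\in\Delta(\N)}\langle p,\,u-\mathbf f\rangle\ =\ \min_{p\in\Delta(\N)}\ \sup_{u\in\mathcal U}\langle p,\,u-\mathbf f\rangle\ \ge\ 0 .
\end{equation}
Hence there is a stochastic policy $\chi^\ast$ with $\E(U_i(\chi^\ast)\mid\theta_{\N,0})\ge f(\alpha_i,\theta_{i,0})$ for every $i\in\N$, i.e.\ NTU-universal feasibility at $(\alpha,\theta_{\N,0})$; as $\theta_{\N,0}$ and $\alpha$ were arbitrary, $f$ is NTU-universally feasible.

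I expect the main obstacle to be the final minimax step in full generality: the argument as written gives, for every $\eps>0$, a policy with $\E(U_i(\chi^\ast)\mid\theta_{\N,0})\ge f(\alpha_i,\theta_{i,0})-\eps$ for all $i$, and upgrading ``$\sup$'' to ``attained'' requires $\mathcal U$ (equivalently its closure) to be well behaved — e.g.\ closed, which is routine for finite-horizon games with suitably tame type and action spaces but needs a hypothesis in general. A secondary point to handle with care is the simulation/transport step at boundary $p$ (some $p_i=0$), where the coordinatewise scaling map is no longer invertible and one instead simulates the rescaled game by freezing the affected coordinates at the zero type.
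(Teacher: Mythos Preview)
Your proof is correct and follows essentially the same approach as the paper: both arguments hinge on rescaling the type profile by a nonnegative weight vector, invoking universal feasibility plus homogeneity to get a weighted-sum inequality, and then applying a convex separation/duality step (you use Sion's minimax, the paper uses Farkas' lemma) to extract a single policy meeting all targets simultaneously. Your explicit flagging of the $\sup$-vs-attained issue and the $p_i=0$ boundary case is more careful than the paper's own treatment, which glosses over both.
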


\begin{proof}
    NTU-universal feasibility means universal feasibility with 0 transfers but with a stochastic rule. From the support of the stochastic rule, we can choose a deterministic rule with at least the same total valuation.
    
    For the second statement, consider a homogeneous universally feasible target function $f$.
    Assume for contradiction that $f$ is not NTU-universally feasible, and let $(\alpha_{\N}, \theta_{\N})$ be a counterexample.
    For a deterministic decision policy $\chi$ on this instance define
    \begin{equation}
        d(\chi)_i = \E\big(U_i(\chi) \bigwhere \theta_{\N}\big) - f(\alpha_i, \theta_i).
    \end{equation}
    If a convex combination of the $d(\chi)$ vectors were nonnegative, then the corresponding stochastic decision policy would be a feasible solution, contradicting the assumption that $(\alpha_{\N}, \theta_{\N})$ is a counterexample.
    Therefore, by a separating-hyperplane argument, there exists a vector $\beta \in \R_+^{\N}$ such that
    \begin{equation}
        \big\langle d(\chi), \beta \big\rangle < 0
    \end{equation}
    for every deterministic decision policy $\chi$.

    Let $\widetilde{\chi}$ be an $f$-feasible deterministic decision policy for the scaled instance $(\alpha_{\N}, (\beta \cdot \theta)_{\N})$.
    Define a deterministic decision policy $\chi_\beta$ for the original instance by
    \begin{equation}
        \chi_\beta(h) := \widetilde{\chi}(\beta \cdot h),
    \end{equation}
    where $\beta \cdot h$ denotes the history obtained from $h$ by scaling player $i$'s private type by $\beta_i$ at every date.
    By the scalability assumption, the law of the scaled history under $\chi_\beta$ in the original instance coincides with the law of the history under $\widetilde{\chi}$ in the scaled instance. Hence, for every $i \in \N$,
    \begin{equation}
        \E\Big(U_i(\widetilde{\chi}) \Bigwhere (\beta \cdot \theta)_{\N}\Big)
        = \beta_i \cdot \E\Big(U_i(\chi_\beta) \Bigwhere \theta_{\N}\Big).
    \end{equation}
    Therefore,
    \begin{align}
        \big\langle d(\chi_\beta), \beta \big\rangle
        &= \sum_{i \in \N} \beta_i \cdot \E\Big(U_i(\chi_\beta) \Bigwhere \theta_{\N}\Big) - \sum_{i \in \N} \beta_i \cdot f(\alpha_i, \theta_i) \\
        &= \sum_{i \in \N} \E\Big(U_i(\widetilde{\chi}) \Bigwhere (\beta \cdot \theta)_{\N}\Big) - \sum_{i \in \N} f(\alpha_i, \beta_i \cdot \theta_i) \\
        &\ge 0,
    \end{align}
    where the last inequality follows from the universal feasibility of $f$ for the scaled instance and the homogeneity of $f$. This contradicts $\big\langle d(\chi_\beta), \beta \big\rangle < 0$.
\end{proof}

\begin{Note}
    If the decision set $X$ includes an arbitrary transfer rule, formally, $X = X_0 \times (t \in \R^{\N} | \sum t_i = 0)$ and $v_i\big(\theta_i, (x_0, t_{\N})\big) = v^{\prime}_i(\theta_i, x_0) + t_i$, then NTU-universal feasibility is equivalent to universal feasibility.
    Accordingly, the results for the TU model can be deduced as a formal consequence of the results for the NTU model in this special case.
\end{Note}

In this section, we no longer fix the number of rounds $T$, but with a focus on repeated games and its extensions, we compare the settings with different values of $T$.
Accordingly, a target function will be denoted by $f_T = f_T(\alpha, \theta)$, and $f = (f_1, f_2, ...)$.
In particular, for independently repeated games, $f_T(\alpha, \theta) = T \cdot f_1(\alpha, \theta)$ is a natural choice and, in this case, the universal feasibility of $f_T$ is equivalent to the universal feasibility of $f_1$.

We use the phrase bounded-effect setup for environments in which the effect of one player's report on another player's future NTU-GUM accounting can be controlled by type-dependent source and sensitivity multipliers.
The next definition gives the formal condition used by the transfer-free application.

\begin{Def}\label{def:externality-multipliers}
    We call ${\rm Sour}(\theta_{i,0})$ an externality source multiplier and
    ${\rm Sens}(\theta_{i,0})$ an externality sensitivity multiplier.
    We say that $(f, {\rm Sour}, {\rm Sens})$ is universally feasible if for every anticipated initial setup $(\alpha_{\N},\vartheta_{\N,0},T)$ there is a decision policy $\chi$ such that
    \[
        \E\big(U_i(\chi)\bigwhere \vartheta_{\N,0}\big) \ge f_T(\alpha_i,\vartheta_{i,0})
        \qquad (i\in\N),
    \]
    and, for every subsequent history, the absolute value of the GUM externality $\gamma_{i\to j,t}$ by $i$ on $j$ with any single report is bounded by ${\rm Sour}(\vartheta_{i,0})\cdot {\rm Sens}(\vartheta_{j,0})$.
\end{Def}

Let
\begin{equation}
    {\rm range}(\theta_{i,0}) := \sup \mathcal{V}(\theta_{i,0}) - \inf \mathcal{V}(\theta_{i,0}),
\end{equation}
where $\mathcal{V}(\theta_{i,0})$ denotes the set of one-period valuations that player $i$ may have in the given setup starting from $\theta_{i,0}$.
Informally, ${\rm range}(\theta_{i,0})$ will only be used as a crude upper bound on the per-round loss that player $i$ may suffer if, after some bad event, the mechanism ignores the true reports of player $i$ and effectively replaces them by arbitrary reports.
In a concrete setup, whenever a smaller valid upper bound on this per-round loss is available, one may replace ${\rm range}(\theta_{i,0})$ by that sharper quantity.

\begin{Example}
    In a repeated game with independent periods, if $f$ is universally feasible and ${\rm Sour} \equiv 1$, ${\rm Sens} \equiv {\rm range}$, then $(f,{\rm Sour},{\rm Sens})$ is universally feasible.
    
    As another example, if $\theta_{i,0}$ indicates a Markov chain with a decorrelation parameter $\gamma(\theta_{i,0})$ as defined in~\cite{FBT24}, then $f$ with ${\rm Sour}(\theta_{i,0}) = \gamma(\theta_{i,0})^{-2}$ and ${\rm Sens}(\theta_{i,0}) = \gamma(\theta_{i,0})^{-2} \cdot {\rm range}(\theta_{i,0})$ is universally feasible.
\end{Example}

\paragraph{Supporting renormalization for finite horizons.}
Fix an initial setup and let $K\subset\R^{\N}$ be the set of expected utility profiles generated by stochastic decision policies.
If a policy induces $u^*\in K$ and $u^*$ has a strictly positive supporting normal, i.e., there is $\beta\in\R^{\N}_{>0}$ such that
\[
    \sum_{i\in\N}\beta_i u_i^* \ge \sum_{i\in\N}\beta_i u_i
    \qquad\text{for every }u\in K,
\]
then, after multiplying player $i$'s utility by $\beta_i$, that policy is total-valuation-maximizing.
For a finite horizon with finite action and type spaces, $K$ is a compact convex polytope; hence this condition is exactly the requirement that the chosen profile lie on an exposed Pareto face with a strictly positive exposing normal.
Indeed, the displayed inequality is precisely maximization of the sum of the rescaled utilities, and the finite case is the standard supporting-hyperplane characterization of exposed faces of the policy polytope.

For the theorem below, say that $(f,{\rm Sour},{\rm Sens})$ admits \emph{supported witnesses} if, for every horizon $T$, every positive weight vector $\alpha_{\N}$ with $\sum_i\alpha_i=1$, and every anticipated initial setup $\vartheta_{\N,0}$, the policy witnessing Definition~\ref{def:externality-multipliers} can be chosen so that its expected utility profile has a strictly positive supporting normal, while satisfying the same single-report externality bounds.

\begin{Theorem}\label{thm:prior-free-ntu}
    If $(f, {\rm Sour}, {\rm Sens})$ is universally feasible and admits supported witnesses, then for every horizon $T\ge1$ and every positive weight vector $\alpha_{\N}$ with $\sum_i\alpha_i=1$ the target levels
    \begin{equation} \label{eq:NTU-guarantee}
    f_T(\alpha_i, \theta_{i,0}) - \sqrt{T} \cdot \ln(T/\alpha_i) \cdot {\rm Sens}(\theta_{i,0}) - {\rm range}(\theta_{i,0}) \cdot \exp \Big(\frac{{\rm Sour}(\theta_{i,0})^2}{2\alpha_i^2}\Big)
    \end{equation}
    are implementable in guaranteed utilities by a prior-free and transfer-free mechanism.
    Zero-weight players, if present in an application, are omitted before applying this statement.
\end{Theorem}

\begin{proof}
    Omit zero-weight players, so $\alpha_i>0$ for every remaining player.
    The mechanism asks that every player $i$ report an initial type $\widehat\theta_{i,0}$.
    Fix a reported profile $\widehat\theta_{\N,0}$.
    For the anticipated setup $(\alpha_{\N},\widehat\theta_{\N,0},T)$, choose a supported witnessing policy $\chi^*$ and a supporting vector $\beta\in\R^{\N}_{>0}$.
    Apply NTU-GUM~\cite*{CLRT} to the rescaled instance in which player $k$'s utility is multiplied by $\beta_k$.
    By the supporting-renormalization observation above, $\chi^*$ is total-valuation-maximizing in this rescaled instance.
    In rescaled units the target, sensitivity multiplier, and range term of player $k$ are multiplied by $\beta_k$; ${\rm Sour}$ and the weights are unchanged.
    Thus, if player $i$ reports the initial type truthfully, the rescaled GUM guarantee before the NTU budget losses is at least $\beta_i f_T(\alpha_i,\theta_{i,0})$.

    Use the budget constraints
    \begin{equation}\label{eq:ntu-budget-rescaled}
        B_{i,j}=\alpha_i\sqrt{T}\,\ln(T/\alpha_j)\,\beta_j {\rm Sens}(\widehat\theta_{j,0}),
    \end{equation}
    measured in rescaled utility units.
    We estimate losses in rescaled units and divide player $i$'s final guarantee by $\beta_i$.

    With NTU-GUM, each player $i$ gets the same guaranteed utility as with GUM except for two sources of error: the virtual transfers to $i$ that are not paid, and the possible loss if $i$ runs out of one of his budgets and his reports are subsequently ignored.

    First consider the virtual transfers to player $i$ that are not paid in the transfer-free implementation.
    By the NTU-GUM budget construction, their total loss is bounded by $\sum_{j\ne i}B_{j,i}$ in rescaled units.
    Therefore the first error, after division by $\beta_i$, is at most
    \[
        \sum_{j\in\N\setminus\{i\}}\frac{B_{j,i}}{\beta_i}
        =\sum_{j\in\N\setminus\{i\}}\alpha_j\sqrt{T}\,\ln(T/\alpha_i)\,{\rm Sens}(\theta_{i,0})
        \le \sqrt{T}\,\ln(T/\alpha_i)\,{\rm Sens}(\theta_{i,0}),
    \]
    where we used truthful initial reporting, $\widehat\theta_{i,0}=\theta_{i,0}$ and $\sum_{j\ne i}\alpha_j\le1$.

    For the budget of $i$ against $j$, the externalities are computed in the reported/anticipated instance. Thus, if $i$ is truthful, then $(\gamma_{i\to j,t})_t$ is a martingale-difference sequence and, in rescaled units,
    \[
        |\gamma_{i\to j,t}|
        \le C_{i,j}:={\rm Sour}(\theta_{i,0})\,\beta_j {\rm Sens}(\widehat\theta_{j,0}).
    \]
    If $C_{i,j}=0$, then each such externality is zero and the bad event below has probability zero.
    Otherwise the one-sided maximal Azuma--Hoeffding inequality gives
    \begin{equation}\label{eq:Azuma1}
        \P\left(\min_{0\le s\le T}\sum_{t=1}^s\gamma_{i\to j,t}<-B_{i,j}\right)
        \le
        \exp\!\left(\frac{-B_{i,j}^2}{2T\,C_{i,j}^2}\right)
        =\exp\!\left(\frac{-\alpha_i^2\ln^2(T/\alpha_j)}{2{\rm Sour}(\theta_{i,0})^2}\right).
    \end{equation}
    Using $-c\ln^2(T/x)\le \ln(x/T)+1/(4c)$ with $c=\alpha_i^2/(2{\rm Sour}(\theta_{i,0})^2)$ and $x=\alpha_j$, the last expression is at most
    \[
        \frac{\alpha_j}{T}\exp\!\left(\frac{{\rm Sour}(\theta_{i,0})^2}{2\alpha_i^2}\right).
    \]
    The same probability bound is trivial in the case $C_{i,j}=0$.
    If this bad event occurs, player $i$ may lose at most $T\beta_i{\rm range}(\theta_{i,0})$ in rescaled units. Hence, after summing over $j\ne i$ and dividing by $\beta_i$, the second error is at most
    \[
        {\rm range}(\theta_{i,0})\exp\!\left(\frac{{\rm Sour}(\theta_{i,0})^2}{2\alpha_i^2}\right).
    \]
    Combining the two error bounds gives \eqref{eq:NTU-guarantee}.
\end{proof}

\section{A quantitative application: repeated single-good allocation}
\label{sec:constant-bounds}

The preceding sections are model-general.
We now specialize to repeated single-good allocation in order to quantify the strength of one canonical family of target levels.

The analysis in this section is an auxiliary feasibility analysis.
It does not introduce a new prior-free mechanism and it does not modify the lifting theorem.
Instead, we define fair-floor target levels \(g_i=f^*(\alpha_i,D_i)\), prove that the vector \(g\) is feasible in the single-good allocation problem, and then ask how much all coordinates of \(g\) could be simultaneously increased while remaining feasible.
By the prior-free TU lifting theorem, feasibility of these targets is enough to implement them as playerwise guaranteed utility levels with transfers.
Their transfer-free interpretation is for the bounded-effect NTU cases covered by Theorem~\ref{thm:prior-free-ntu}.

We consider two versions of the auxiliary single-round problem.
In the quota-free version there is no explicit frequency or share constraint.
The quota-constrained version is motivated by the quota-balance feature of \cite*{FBT24}: each player receives the good the prescribed number of times.

For the quota-constrained upper bound, we work with the expected-share relaxation
\[
    \E[w_i(V_\N)]\le \alpha_i\qquad(i\in\N).
\]
Every quota-constrained rule is feasible for this relaxation, while the fair-floor rule satisfies the prescribed quotas.
Thus an upper bound for the relaxed class is also an upper bound for the quota-constrained comparison.
The relaxed model contains the exact-share no-withholding formulation: any rule with $\E[w_i(V_\N)]=\alpha_i$ and no withholding is admissible here.
Conversely, for the purpose of upper bounds it has the same supremal scale factor.
Indeed, if a rule withholds mass and has shares $q_i:=\E[w_i(V_\N)]\le\alpha_i$, then the withheld mass may be reassigned to players in proportion to the deficits $\alpha_i-q_i$; all shares become exactly $\alpha_i$ and all utilities weakly increase because values are nonnegative.

\subsection{Auxiliary single-round allocation problem and fair-floor target}\label{sec:one-shot-allocation}

Consider a single-round resource allocation problem. There is a set of agents $\N=\{1,\dots,n\}$ and a single indivisible good.
Each agent $i$ has a private valuation $V_i \in \Rplusz$, drawn independently from a distribution $D_i$ on $\Rplusz$.

We define the allocation rule as a measurable map $w:(\Rplusz)^n\to \Delta(\N\cup\{0\})$.
For convenience, we allow a null outcome $0$. Since values are nonnegative, reallocating the mass assigned to $0$ according to a lottery over $\N$ weakly preserves all guaranteed-utility lower bounds.
For a value profile $v\in(\Rplusz)^n$, we write $w_i(v)$ for the probability that agent $i$ receives the good.
In particular, $w_i(v)\ge 0$ for all $i\in\N$ and $\sum\limits_{i\in\N} w_i(v)\le 1$.

The expected utility of agent $i$ depends on the mechanism $w$ and the environment $(\alpha_{\N}, D_{\N})$. We denote it formally as:
\begin{equation}\label{eq:Ui_def}
    U_i(\alpha_{\N}, D_{\N}, w)
    = \E_{V_\N \sim D_{\N}} \big[ V_i \cdot w_i(V_\N) \big].
\end{equation}

\paragraph{The fair-floor target.}
Let $\alpha \in (0,1]$. For a distribution $D$, use the same randomized CDF rank construction as in \eqref{eq:randomized-cdf-rank}: if $V\sim D$ and $Z\sim {\rm Unif}[0,1]$ is independent, then
\[
    R_D(V,Z)=F_D(V-)+Z\bigl(F_D(V)-F_D(V-)\bigr).
\]
This is uniform on the CDF jump interval conditional on $V$, and marginally uniform on $[0,1]$.  We define the fair-floor target level as
\begin{equation}\label{eq:fstar_def}
    f^*(\alpha, D) = \E\left[ V \cdot R_D(V,Z)^{\frac{1}{\alpha} - 1} \right].
\end{equation}

\begin{Lemma}[Feasibility of the fair-floor target]\label{lem:benchmark}
Assume $\alpha_j>0$ for every $j\in\N$, $\sum\limits_{j\in\N}\alpha_j=1$, and that $V_1,\dots,V_n$ are drawn independently with $V_i\sim D_i$.
Consider the stochastic allocation rule that draws $Z_j\sim {\rm Unif}[0,1]$ independently of $V_\N$ and independently across $j$, sets $r_j:=R_{D_j}(V_j,Z_j)$, and assigns the good to
\[
k(V_\N,Z_\N)\in \argmax_{j\in\N} \ r_j^{1/\alpha_j},
\]
with an arbitrary tie-breaking rule (ties occur with probability $0$).
Equivalently, after integrating out $Z_\N$, this stochastic rule defines a measurable allocation kernel $w(v)$ in the sense of \eqref{eq:Ui_def}.
Then for every $i\in\N$ we have
\[
U_i(\alpha_\N,D_\N,w)=f^*(\alpha_i,D_i),
\qquad
\E[w_i(V_\N)]=\alpha_i.
\]
\end{Lemma}

\begin{proof}
Let $r_i:=R_{D_i}(V_i,Z_i)$. By construction, $r_i\sim \mathrm{Unif}[0,1]$, and independence of the pairs $(V_i,Z_i)$ implies independence of the $r_i$'s.
Moreover, for each $j$ the random variable $r_j^{1/\alpha_j}$ has a continuous distribution on $[0,1]$, hence ties occur with probability $0$.

Fix $i\in\N$ and condition on $r_i=s\in[0,1]$. Player $i$ wins iff for all $j\neq i$ we have
$r_j^{1/\alpha_j}\le s^{1/\alpha_i}$, equivalently $r_j\le s^{\alpha_j/\alpha_i}$. Therefore,
\[
\P(i\text{ wins}\mid r_i=s)
=\prod_{j\neq i}\P(r_j\le s^{\alpha_j/\alpha_i})
=\prod_{j\neq i} s^{\alpha_j/\alpha_i}
= s^{\frac{1-\alpha_i}{\alpha_i}}
= s^{\frac{1}{\alpha_i}-1},
\]
where we used $\sum\limits_{j\in\N}\alpha_j=1$.

Finally,
\[
U_i(\alpha_\N,D_\N,w)
=\E\!\big[V_i\,\P(i\text{ wins}\mid r_i)\big]
=\E\!\big[V_i\, r_i^{\frac{1}{\alpha_i}-1}\big]
=f^*(\alpha_i,D_i).
\]
The same conditional probability computation, without the factor $V_i$, gives
\[
    \E[w_i(V_\N)]
    =\int_0^1 s^{1/\alpha_i-1}\,ds
    =\alpha_i,
\]
where $w$ is the kernel obtained after integrating out $Z_\N$.
\end{proof}

For the rest of this section, write
\[
    g_i := f^*(\alpha_i,D_i)
\]
for the fair-floor target level of player \(i\). Lemma~\ref{lem:benchmark} says that \(g=(g_i)_i\) is feasible in the auxiliary allocation problem; for the quota-constrained comparison, it satisfies the prescribed shares exactly.

\subsection{Asymptotic upper bounds}

\paragraph{Optimality convention.}
Fix an instance \((\alpha_\N,D_\N)\) and an admissible class of allocation rules \(\mathcal A\), either the quota-free class or the expected-share relaxation used for the quota-constrained comparison. Let
\[
    K_{\mathcal A}(\alpha_\N,D_\N)
    :=
    \bigl\{(U_i(\alpha_\N,D_\N,w))_{i\in\N}: w\in\mathcal A\bigr\}
\]
be the feasible expected-utility region in the auxiliary allocation problem.
For \(\lambda\ge1\), we say that the target levels \(g_i=f^*(\alpha_i,D_i)\) are \(1/\lambda\)-optimal in the class \(\mathcal A\) if there is no \(u\in K_{\mathcal A}(\alpha_\N,D_\N)\) such that
\[
    u_i>\lambda g_i\qquad\text{for every }i\in\N.
\]
Equivalently, no admissible allocation rule can improve all these target levels by a common factor strictly larger than \(\lambda\).

This is the natural multiplicative comparison for NTU guarantees.
Utility units are player-specific: multiplying player \(i\)'s utility by a positive constant is only a change of representation.
Dually, in supported cases, the infeasibility of a strict common improvement is equivalent to the existence of a positive exchange-rate vector \(\beta\) such that no admissible rule attains more than \(\lambda\sum_i\beta_i g_i\) in \(\beta\)-weighted expected utility.

\begin{Theorem}[Quota-free $\lambdaBar$ upper bound]\label{thm:1.283}
There do not exist weights $\alpha_{\N} > 0$ with $\sum\limits_{i\in\N} \alpha_i = 1$, distributions $D_{\N}$ with finite expectations, and an allocation rule $w:(\Rplusz)^n\to\Delta(\N\cup\{0\})$ such that for every agent $i \in \N$,
\begin{equation}\label{eq:main_inequality_1283}
U_i(\alpha_{\N}, D_{\N}, w) > \lambdaBar \cdot f^*(\alpha_i, D_i).
\end{equation}
\end{Theorem}

By Lemma~\ref{lem:benchmark}, the target vector \(g_i=f^*(\alpha_i,D_i)\) is feasible in the auxiliary allocation problem.
Theorem~\ref{thm:1.283} says that no quota-free feasible expected-utility vector strictly dominates \(\lambdaBar\,g\) coordinatewise.
Hence these fair-floor target levels are \(1/\lambdaBar\)-optimal in the quota-free model, in the sense of the convention above.
The proof is given in Appendix~\ref{app:proofs-constant-bounds}.

\begin{Theorem}[Quota-constrained $\lambdaShare$ upper bound]\label{thm:1.147}
In the expected-share relaxation of the quota-constrained problem, there do not exist weights $\alpha_{\N} > 0$ with $\sum\limits_{i\in\N} \alpha_i = 1$, distributions $D_{\N}$ with finite expectations, and an allocation rule $w:(\Rplusz)^n\to\Delta(\N\cup\{0\})$ satisfying
\[
    \E[w_i(V_\N)]\le\alpha_i\qquad(i\in\N)
\]
such that for every agent $i\in\N$,
\begin{equation}\label{eq:main_inequality_1147}
U_i(\alpha_{\N},D_{\N},w)>\lambdaShare\cdot f^*(\alpha_i,D_i).
\end{equation}
\end{Theorem}

By Lemma~\ref{lem:benchmark}, the target vector \(g_i=f^*(\alpha_i,D_i)\) is feasible and satisfies the prescribed shares exactly.
Theorem~\ref{thm:1.147} says that no feasible expected-utility vector in the expected-share relaxation of the quota-constrained problem strictly dominates \(\lambdaShare\,g\) coordinatewise.
Hence these fair-floor target levels are \(1/\lambdaShare\)-optimal for the quota-constrained comparison.
The proof is completed in Subsection~\ref{app:proofs-expected-share}.

\paragraph{Capacity-envelope guide to the two constants.}
After the rank-space Bernoulli reduction, both upper-bound proofs use the same subset-capacity obstacle. In both cases the obstacle rules out a feasible expected-utility vector that strictly dominates a common rescaling of the fair-floor target vector \(g\). The two constants differ only in the terminal test. Writing
\[
    \Phi_\lambda(t):=\frac{\lambda(1-e^{-t})}{t},
\]
the relevant capacity curve is the obstacle-following trajectory below $h(A)=1-e^{-A}$. In the $y$-parameter it is
\[
    x_\lambda(y)=
    \int_{y_0(\lambda)}^y
    \left(\frac1{t^2}-\frac1{t(e^t-1)}\right)\,\dd t,
    \qquad
    y_0=\lambda(1-e^{-y_0}).
\]
For the expected-share completion it is more convenient to use the residual-capacity parameter $c$, defined by $S=e^{-c}$. If $e^{-c}=\Phi_\lambda(y)$, set
\[
    X_\lambda(c):=x_\lambda(y),
    \qquad
    S_\lambda(X_\lambda(c))=e^{-c}.
\]
In the quota-free completion, feasibility requires this curve to accumulate total normalized mass $x=1$; the critical value is the one for which the trajectory just reaches $x=1$, numerically $\lambda_{\rm crit}\approx\lambdaCritNum$.

For the expected-share relaxation of the quota-constrained variant, the share ceiling gives a maximal tail parameter
\[
    \bar\tau_\lambda=-\log(1-1/\lambda).
\]
If the remaining normalized mass is $m=1-x$, even the most favorable ceiling-tight tail can provide expected utility at most
\[
    S_\lambda(x)\bigl(1-e^{-m\bar\tau_\lambda}\bigr)
    =S_\lambda(x)\bigl(1-(1-1/\lambda)^m\bigr).
\]
Thus tail closure requires
\[
    S_\lambda(x)\ge \widehat S_\lambda(x)
    :=\frac{1-x}{1-(1-1/\lambda)^{1-x}},
\]
with the continuous endpoint value $\widehat S_\lambda(1)=1/\bar\tau_\lambda$.  The quota-constrained critical value is where the capacity curve just touches this threshold, numerically near $\lambdaShareCritNum$; Theorem~\ref{thm:1.147} uses the slightly weaker rounded bound $\lambdaShare$.
Figure~\ref{fig:critical-curves} illustrates the common capacity curve and the two terminal tests.  The rigorous proofs are given in Appendix~\ref{app:proofs-constant-bounds}, with the expected-share completion for the quota-constrained bound in Subsection~\ref{app:proofs-expected-share}.

\begin{figure}[H]
    \centering
    \includegraphics[width=0.86\textwidth]{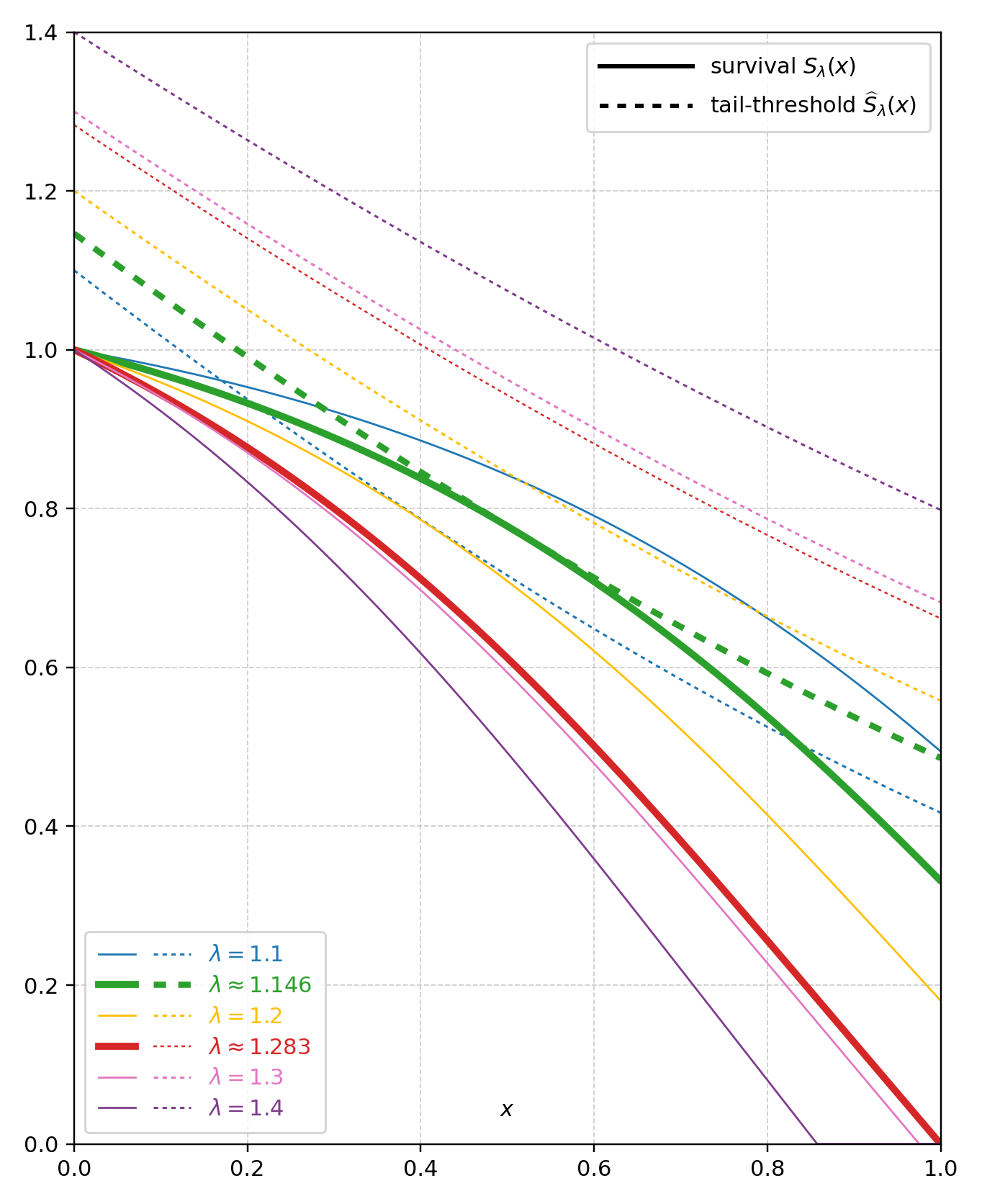}
    \caption{Illustration of the shared capacity-envelope trajectory and the two terminal tests.  Solid curves show $S_\lambda(x)$; dashed curves show the tail-threshold $\widehat S_\lambda(x)$ for the expected-share relaxation of the quota-constrained closure.  The quota-free critical value is where the capacity curve accumulates mass $x=1$ as survival vanishes (red, $\lambda\approx\lambdaCritNum$).  The quota-constrained touching regime is shown in green near $\lambda\approx\lambdaShareCritNum$; Theorem~\ref{thm:1.147} uses the slightly weaker rounded bound $\lambdaShare$. Both highlighted values correspond to infeasibility thresholds for a larger common rescaling of the fair-floor target vector \(g_i=f^*(\alpha_i,D_i)\), in two different feasible classes.}
    \label{fig:critical-curves}
\end{figure}

\section{Open questions}\label{sec:open-questions}

The errors for the transfer-free case are far from optimal.
The following question provides a simple test case for improvement.

\begin{Question} \label{que:gue-uniform}
Consider the repeated good allocation game where the valuations on the good are chosen uniformly from $[0,1]$, independently across periods and players.

A natural candidate class of mechanisms has the following structure.
The mechanism chooses a compensation vector of the players dependent on the report history, and in each round, the player with the highest reported valuation plus compensation gets the good.
Find the best rule for the compensation vector.
\end{Question}

\begin{Question}
    Extend the results for more general utility functions, e.g., allowing risk aversion. To do so, apply the general tendering mechanism in \cite{CsE}.
\end{Question}

\begin{Question}
    Extend the results for costly transfers, or other intermediate cases between TU and NTU.
\end{Question}

\begin{Question}
    For the setup in Theorem~\ref{thm:1.283}, can the freedom in the surplus-splitting rule $c$ improve equilibrium-welfare or price-of-anarchy bounds for scalar objectives other than the common rescaling of the fair-floor target levels?
\end{Question}

\section{Acknowledgement}

I thank Siddhartha Banerjee, András Pongrácz, and Éva Tardos for helpful discussions.

\bibliographystyle{chicago}
\bibliography{PF_GUM}

\newpage
\appendix

\section{Capacity proofs for the single-good upper bounds}\label{app:proofs-constant-bounds}

This appendix proves Theorems~\ref{thm:1.283} and~\ref{thm:1.147}.  The two arguments share the same rank-space Bernoulli reduction and the same subset-capacity obstacle.  They diverge only at the terminal condition: the quota-free proof asks whether the capacity-envelope trajectory can be continued to total player mass one, while the quota-constrained proof, via the expected-share relaxation, has the additional ceiling imposed by the share constraints.

\subsection{Rank-space and Bernoulli reduction}\label{app:common-rank-bernoulli}

Fix weights $\alpha_i>0$ with $\sum_i\alpha_i=1$, independent nonnegative values $V_i\sim D_i$ with finite expectations, and an allocation rule
\[
    w:(\Rplusz)^n\to\Delta(\N\cup\{0\}).
\]
For a distribution $D$ write $Q_D$ for its quantile function and use the randomized-rank transform $R_D(V,Z)$ from \eqref{eq:randomized-cdf-rank}.  Passing to ranks gives independent uniform variables $R_i$ and an induced rank rule $\widehat w:[0,1]^n\to\Delta(\N\cup\{0\})$.  Define the interim curve
\[
    x_i(s):=\int_{[0,1]^{n-1}}\widehat w_i(s,r_{-i})\,\dd r_{-i}.
\]
Then
\begin{equation}\label{eq:cap-rank-representation}
    U_i=\int_0^1 Q_{D_i}(s)x_i(s)\,\dd s,
    \qquad
    f^*(\alpha_i,D_i)=\int_0^1 Q_{D_i}(s)s^{1/\alpha_i-1}\,\dd s.
\end{equation}
If expected-share constraints are present, they become
\begin{equation}\label{eq:cap-share-rank}
    \int_0^1 x_i(s)\,\dd s\le \alpha_i.
\end{equation}
Thus changing only the quantile functions while keeping the rank rule fixed preserves all expected-share constraints.
After the Bernoulli replacement, the rank-space rule is allowed to depend on the randomized rank inside atoms. This only enlarges the admissible class of allocation rules, and hence is harmless for an upper-bound/impossibility proof. From this point on, we suppress hats and write $w$ also for the rank-space rule; expectations in the Bernoulli reductions are over the independent rank variables.

\begin{Proposition}[Bernoulli reduction]\label{prop:cap-bernoulli-reduction}
Fix $\lambda>1$.  Suppose that
\begin{equation}\label{eq:cap-counterexample-general}
    U_i>\lambda f^*(\alpha_i,D_i)\qquad(i\in\N).
\end{equation}
If expected-share constraints \eqref{eq:cap-share-rank} are also imposed, assume them as well.  Then there is another rank-space instance, with the same weights and the same rank rule, in which every value is Bernoulli, $V_i\in\{0,1\}$ with $0<\P(V_i=1)<1$, and \eqref{eq:cap-counterexample-general} still holds.  In the expected-share case, the constraints \eqref{eq:cap-share-rank} are preserved.
\end{Proposition}

\begin{proof}
Fix player $i$, keeping the rank rule and all other quantiles fixed.  Put
\[
    H_i(s):=x_i(s)-\lambda s^{1/\alpha_i-1}.
\]
The strict inequality for player $i$ says
\[
    \int_0^1 Q_{D_i}(s)H_i(s)\,\dd s>0.
\]
By the layer-cake representation for nonnegative quantiles,
\[
    Q_{D_i}(s)=\int_0^\infty \one\{Q_{D_i}(s)>t\}\,\dd t.
\]
For each $t$, the set $\{s:Q_{D_i}(s)>t\}$ is an upper interval, up to null sets.  Hence $\one\{Q_{D_i}(s)>t\}$ is a Bernoulli threshold quantile $Q^{(p)}(s)=\one\{s>1-p\}$ for some $p\in[0,1]$.  Since the integral against $H_i$ is positive, some threshold gives a positive value.  The endpoint $p=0$ gives zero.  If only $p=1$ gave a positive value, continuity of
\[
    p\mapsto \int_{1-p}^1 H_i(s)\,\dd s
\]
would give a nearby $p<1$ with positive value.  Thus one can choose $p\in(0,1)$.

Replacing $D_i$ by this Bernoulli law preserves the rank rule and the interim curves of all players.  Therefore the other players' inequalities are unchanged, player $i$'s inequality is preserved, and any expected-share constraints remain valid.  Repeating this for all players gives the claim.
\end{proof}

For a Bernoulli instance we write
\begin{equation}\label{eq:cap-p-tau}
    p_i:=\P(V_i=1)=1-e^{-\alpha_i\tau_i},
    \qquad 0<\tau_i<\infty.
\end{equation}
Then
\begin{equation}\label{eq:cap-bernoulli-benchmark}
    f^*(\alpha_i,D_i)=\alpha_i(1-e^{-\tau_i}).
\end{equation}

\begin{Lemma}[Subset capacity]\label{lem:cap-subset-capacity}
In a Bernoulli rank-space instance satisfying
\[
    U_i\ge \lambda\alpha_i(1-e^{-\tau_i})\qquad(i\in\N),
\]
one has, for every $S\subseteq\N$,
\begin{equation}\label{eq:cap-subset-capacity}
    \lambda\sum_{i\in S}\alpha_i(1-e^{-\tau_i})
    \le
    1-\exp\left(-\sum_{i\in S}\alpha_i\tau_i\right).
\end{equation}
\end{Lemma}

\begin{proof}
Let $R_\N$ be the independent uniform rank vector and write
\[
    V_i(R_i)=\one\{R_i>1-p_i\}.
\]
For every rank realization $r$,
\[
    \sum_{i\in S}V_i(r_i)w_i(r)
    \le
    \one\{\exists i\in S: V_i(r_i)=1\},
\]
because at most one unit of allocation probability can be assigned to agents in $S$.  Taking expectations gives
\[
    \sum_{i\in S}U_i
    \le
    \P(\exists i\in S:V_i=1)
    =1-\prod_{i\in S}(1-p_i).
\]
Using \eqref{eq:cap-p-tau}, the right-hand side is
\[
    1-\exp\left(-\sum_{i\in S}\alpha_i\tau_i\right).
\]
Combining this with the lower bounds on $U_i$ proves \eqref{eq:cap-subset-capacity}.
\end{proof}

\subsection{The common capacity envelope}\label{app:capacity-envelope}

For $\lambda>1$ define
\begin{equation}\label{eq:cap-Phi-def}
    \Phi_\lambda(t):=\frac{\lambda(1-e^{-t})}{t}\qquad(t>0).
\end{equation}
Let $y_0(\lambda)$ be the positive solution of
\begin{equation}\label{eq:cap-y0-equation}
    y_0=\lambda(1-e^{-y_0}),
\end{equation}
which is equivalent to $\Phi_\lambda(y_0)=1$.

The subset-capacity inequality can be viewed as saying that sorted capacity atoms must remain below the obstacle
\[
    h(A):=1-e^{-A}.
\]
The extremal continuous envelope follows the obstacle.  In the parameter $y$ determined by $e^{-A}=\Phi_\lambda(y)$, the accumulated normalized player mass is
\begin{equation}\label{eq:cap-x-y-def}
    x_\lambda(y):=
    \int_{y_0(\lambda)}^y
    \left(\frac1{t^2}-\frac1{t(e^t-1)}\right)\,\dd t.
\end{equation}
Equivalently, writing $S=e^{-A}$ and $x=x_\lambda(y)$ gives the same curve that is displayed as $S_\lambda(x)$ in Figure~\ref{fig:critical-curves}.

\subsection{Quota-free completion and the \texorpdfstring{$\lambdaBar$}{1.283} bound}\label{app:unconstrained-completion}

The quota-free completion uses only the prefix consequences of the subset-capacity inequality.

\begin{Lemma}[Prefix majorization]\label{lem:cap-prefix-majorization}
Fix $\lambda>1$.  Let $\alpha_i>0$, $\sum_i\alpha_i=1$, and
\[
    0<\tau_1\le\tau_2\le\cdots\le\tau_n<\infty.
\]
Assume that for every prefix $\{1,\ldots,k\}$,
\begin{equation}\label{eq:cap-prefix-condition}
    \lambda\sum_{i\le k}\alpha_i(1-e^{-\tau_i})
    \le
    1-\exp\left(-\sum_{i\le k}\alpha_i\tau_i\right).
\end{equation}
Define
\begin{equation}\label{eq:cap-xcrit-def}
    x_{\rm crit}(\lambda):=
    \int_{y_0(\lambda)}^\infty
    \left(\frac1{y^2}-\frac1{y(e^y-1)}\right)\,\dd y.
\end{equation}
Then
\[
    1\le x_{\rm crit}(\lambda).
\]
\end{Lemma}

\begin{proof}
Set
\[
    A_k:=\sum_{i\le k}\alpha_i\tau_i,
    \qquad A_0:=0,
\]
and define a step function $r:[0,\infty)\to[0,1]$ by
\[
    r(a):=\frac{1-e^{-\tau_i}}{\tau_i}\quad\text{for }a\in[A_{i-1},A_i),
    \qquad
    r(a):=0\quad\text{for }a\ge A_n.
\]
The function $y\mapsto(1-e^{-y})/y$ is strictly decreasing, so $r$ is nonincreasing.  Moreover,
\[
    \int_0^{A_k}r(a)\,\dd a=
    \sum_{i\le k}\alpha_i(1-e^{-\tau_i}).
\]
Thus \eqref{eq:cap-prefix-condition} implies
\begin{equation}\label{eq:cap-prefix-domination}
    \int_0^a r(t)\,\dd t\le \frac{1-e^{-a}}{\lambda}\qquad(a\ge0).
\end{equation}
Indeed this is immediate at the prefix endpoints; it holds between endpoints because the left-hand side is affine and the right-hand side is concave, and it remains true after $A_n$ because the right-hand side is increasing.

Let
\[
    r_\lambda(a):=\frac{e^{-a}}{\lambda}.
\]
Then \eqref{eq:cap-prefix-domination} says that the primitive of $r$ is dominated by that of $r_\lambda$.  Since $r$ is nonincreasing, for every continuity point of $r$ one also has $r(a)\le1/\lambda$.

Define $\psi:[0,1)\to[0,\infty)$ by
\[
    \psi\left(\frac{1-e^{-y}}{y}\right)=\frac1y\qquad(y>0),
\]
and set $\psi(0)=0$.  The function $\psi$ is increasing and convex.  Indeed, with $z=1/y$,
\[
    \frac{1-e^{-y}}{y}=z(1-e^{-1/z})=:h(z),
\]
where
\[
    h'(z)=1-e^{-1/z}\left(1+\frac1z\right)>0,
    \qquad
    h''(z)=-\frac{e^{-1/z}}{z^3}<0.
\]
Thus $h$ is increasing and concave, and its inverse $\psi$ is increasing and convex.

Since $\psi(r(a))=1/\tau_i$ on $[A_{i-1},A_i)$,
\begin{equation}\label{eq:cap-prefix-mass}
    \int_0^\infty \psi(r(a))\,\dd a
    =\sum_i\alpha_i=1.
\end{equation}
For nonincreasing functions, primitive domination implies increasing-convex domination.  Concretely, for each $t\ge0$,
\[
    \int_0^\infty (r(a)-t)_+\,\dd a
    =\sup_{b\ge0}\left(\int_0^b r(a)\,\dd a-bt\right),
\]
and the same identity holds for $r_\lambda$.  Hence \eqref{eq:cap-prefix-domination} implies
\[
    \int_0^\infty (r(a)-t)_+\,\dd a
    \le
    \int_0^\infty (r_\lambda(a)-t)_+\,\dd a.
\]
Representing the increasing convex function $\psi$ on the compact interval $[0,1/\lambda]$ by an affine term plus positive combinations of the hinge functions $(x-t)_+$ gives
\begin{equation}\label{eq:cap-prefix-convex-domination}
    \int_0^\infty \psi(r(a))\,\dd a
    \le
    \int_0^\infty \psi(r_\lambda(a))\,\dd a.
\end{equation}

It remains to compute the right-hand side.  Let $y=y(a)$ be defined by
\[
    \frac{1-e^{-y}}{y}=\frac{e^{-a}}{\lambda}.
\]
At $a=0$ this gives \eqref{eq:cap-y0-equation}.  Also
\[
    a=\log y-\log\lambda-\log(1-e^{-y}),
    \qquad
    \frac{\dd a}{\dd y}=\frac1y-\frac1{e^y-1}.
\]
Since $\psi(r_\lambda(a))=1/y(a)$,
\[
    \int_0^\infty \psi(r_\lambda(a))\,\dd a
    =
    \int_{y_0(\lambda)}^\infty
    \frac1y\left(\frac1y-\frac1{e^y-1}\right)\,\dd y
    =x_{\rm crit}(\lambda).
\]
Combining this with \eqref{eq:cap-prefix-mass} and \eqref{eq:cap-prefix-convex-domination} proves the lemma.
\end{proof}

\begin{Proposition}[Numerical verification at $\lambda=\lambdaBar$]\label{prop:cap-numerical-1283}
For $\lambda=\lambdaBar$,
\[
    x_{\rm crit}(\lambda)<1.
\]
\end{Proposition}

\begin{proof}
Let $\lambda=\lambdaBar$, and let $y_0$ be the positive root of $y=\lambda(1-e^{-y})$.  For $\varphi(y):=\lambda(1-e^{-y})-y$, interval evaluation gives
\[
    \varphi(0.52096)>0>\varphi(0.52097),
\]
and $\varphi$ is strictly decreasing on $[0.5,\infty)$.  Hence $y_0\in[a,b]:=[0.52096,0.52097]$.
Using
\[
    \frac1{e^y-1}=\sum_{m=1}^\infty e^{-my},
    \qquad
    E_1(z):=\int_z^\infty \frac{e^{-t}}{t}\,\dd t,
\]
monotonicity gives
\[
    x_{\rm crit}(\lambda)
    =\frac1{y_0}-\sum_{m=1}^\infty E_1(my_0)
    \le
    \frac1a-\sum_{m=1}^{13}E_1(mb).
\]
A direct interval-arithmetic evaluation of these thirteen one-dimensional integrals gives
\[
    \sum_{m=1}^{13}E_1(mb)>0.91959646,
    \qquad
    \frac1a<1.91953318.
\]
Therefore
\[
    x_{\rm crit}(\lambda)<1.91953318-0.91959646<0.99993672<1.
\]
\end{proof}

\begin{proof}[Proof of Theorem~\ref{thm:1.283}]
Assume a strict counterexample exists.  If $f^*(\alpha_i,D_i)=0$ for some $i$, then $V_i=0$ almost surely and $U_i=0$, contradicting strictness.  Hence choose
\[
    \lambdaBar<\lambda<\min_i\frac{U_i(\alpha_\N,D_\N,w)}{f^*(\alpha_i,D_i)}.
\]
By Proposition~\ref{prop:cap-bernoulli-reduction}, reduce to a Bernoulli counterexample at this same $\lambda$.  Relabel players so that $\tau_1\le\cdots\le\tau_n$.  Applying Lemma~\ref{lem:cap-subset-capacity} to every prefix gives the hypotheses of Lemma~\ref{lem:cap-prefix-majorization}, hence
\[
    1\le x_{\rm crit}(\lambda).
\]
On the other hand, $x_{\rm crit}$ is strictly decreasing in $\lambda$: the root $y_0(\lambda)$ is strictly increasing in $\lambda$, and the integrand in \eqref{eq:cap-xcrit-def} is strictly positive.  Therefore
\[
    x_{\rm crit}(\lambda)<x_{\rm crit}(\lambdaBar)<1
\]
by Proposition~\ref{prop:cap-numerical-1283}, a contradiction.
\end{proof}

\subsection{Expected-share completion for the quota-constrained \texorpdfstring{$\lambdaShare$}{1.147} bound}\label{app:proofs-expected-share}

This subsection proves Theorem~\ref{thm:1.147}.  The argument uses only the necessary subset-capacity inequalities and a convex-majorization comparison with the obstacle $h(A)=1-e^{-A}$; no normal form for allocation rules is used.

Throughout this subsection set
\[
    \lambda:=\lambdaShare.
\]
Call an expected-share instance a $\lambda$-counterexample if it satisfies
\[
    U_i(\alpha_\N,D_\N,w)>\lambda f^*(\alpha_i,D_i)\qquad(i\in\N)
\]
and $\E[w_i(V_\N)]\le\alpha_i$ for every $i$.

Let
\[
    \bar\tau=\bar\tau_\lambda:=-\log(1-1/\lambda).
\]
For $\lambda=\lambdaShare$ one has $\bar\tau>1$, since $\lambdaShare<e/(e-1)$. Moreover
\[
    \Phi_\lambda(\bar\tau)=\frac1{\bar\tau}<1=\Phi_\lambda(y_0(\lambda)).
\]
Since $\Phi_\lambda$ is strictly decreasing, this implies $0<y_0(\lambda)<\bar\tau$.

\begin{Lemma}[Capacity-atom reduction]\label{lem:es-capacity-reduction}
If a $\lambda$-counterexample exists, then there are $L>0$ and a nondecreasing step function
\[
    \tau:[0,L]\to[y_0(\lambda),\bar\tau]
\]
such that, with
\[
    B(A):=\int_0^A\Phi_\lambda(\tau(s))\,\dd s,
    \qquad
    h(A):=1-e^{-A},
\]
one has
\begin{equation}\label{eq:es-obstacle}
    B(A)\le h(A)\qquad(0\le A\le L),
\end{equation}
and
\begin{equation}\label{eq:es-mass-one}
    \int_0^L\frac{\dd s}{\tau(s)}=1.
\end{equation}
\end{Lemma}

\begin{proof}
By Proposition~\ref{prop:cap-bernoulli-reduction}, a counterexample may be assumed to have Bernoulli values. Write $p_i=1-e^{-\alpha_i\tau_i}$ as in \eqref{eq:cap-p-tau}. We are using the rank-space rule notation fixed above. Since withholding is allowed, move all allocation mass assigned to inactive players to outcome $0$. Utilities are unchanged and expected shares weakly decrease. Hence, after this no-waste modification,
\[
    U_i=\E[V_iw_i]=\E[w_i]\le\alpha_i,
\]
because $w_i$ is positive only on profiles where player $i$ is active.
A counterexample therefore satisfies
\[
    U_i>\lambda\alpha_i(1-e^{-\tau_i}),
    \qquad
    U_i\le\alpha_i,
\]
and consequently $\tau_i<\bar\tau$ for every $i$.

Set
\[
    a_i:=\alpha_i\tau_i.
\]
Then
\[
    \lambda\alpha_i(1-e^{-\tau_i})=a_i\Phi_\lambda(\tau_i),
    \qquad
    \sum_i\frac{a_i}{\tau_i}=\sum_i\alpha_i=1.
\]
For every nonempty $T\subseteq\N$, Lemma~\ref{lem:cap-subset-capacity} and the strict counterexample inequalities give
\begin{equation}\label{eq:es-subset-capacity}
    \sum_{i\in T}a_i\Phi_\lambda(\tau_i)
    <
    1-\exp\left(-\sum_{i\in T}a_i\right).
\end{equation}
Only this necessary capacity condition is used in the upper-bound proof.  Applying it to $T=\{i\}$ gives
\[
    \Phi_\lambda(\tau_i)<\frac{1-e^{-a_i}}{a_i}<1,
\]
so $\tau_i>y_0(\lambda)$.  Thus the atoms lie in the open interval $(y_0(\lambda),\bar\tau)$; below we use the closed interval $[y_0(\lambda),\bar\tau]$ only to simplify notation.

Choose a random subset $T$ by including $i$ independently with probability $\rho_i\in[0,1]$.  For $T=\varnothing$ add the trivial equality $0=0$ to \eqref{eq:es-subset-capacity}.  Averaging and using the concavity of $h(A)=1-e^{-A}$ gives
\begin{equation}\label{eq:es-fractional-capacity}
    \sum_i\rho_i a_i\Phi_\lambda(\tau_i)
    \le
    1-\exp\left(-\sum_i\rho_i a_i\right)
    \qquad(0\le\rho_i\le1).
\end{equation}

Order the atoms by nondecreasing $\tau_i$.  Since $\Phi_\lambda$ is decreasing, this is the nonincreasing order of the densities $\Phi_\lambda(\tau_i)$.  Let $\tau(s)$ be the step function which takes value $\tau_i$ on an interval of length $a_i$, in this order, and let $L:=\sum_i a_i$.  For each fixed width $A$, the left-hand side of \eqref{eq:es-fractional-capacity} is maximized by taking the first $A$ units of this sorted list, with a fractional last atom if needed.  This gives \eqref{eq:es-obstacle}.  The identity \eqref{eq:es-mass-one} is just $\sum_i a_i/\tau_i=1$.
\end{proof}

Let
\begin{equation}\label{eq:es-p0-cmax}
    p_0:=\Phi_\lambda(\bar\tau)=\frac1{\bar\tau},
    \qquad
    c_{\max}:=\log\bar\tau>0.
\end{equation}
For $p\in[p_0,1]$ define
\[
    \Psi_\lambda(p):=\frac1{\Phi_\lambda^{-1}(p)}.
\]
Finally set
\begin{equation}\label{eq:es-X-def}
    X_\lambda(c):=\int_0^c\Psi_\lambda(e^{-A})\,\dd A,
    \qquad 0\le c\le c_{\max}.
\end{equation}
The elementary calculus facts used below are recorded at the end of the subsection; in particular, $\Psi_\lambda$ is convex on $[p_0,1]$.

\begin{Lemma}[Continuous convex majorization]\label{lem:es-convex-majorization}
Let $r,r^*\in L^1([0,L])$ be nonnegative nonincreasing functions, with values in a compact interval $I\subset[0,\infty)$, with equal integrals and
\[
    \int_0^t r(A)\,\dd A\le \int_0^t r^*(A)\,\dd A\qquad(0\le t\le L).
\]
Then for every convex function $\psi$ on $I$,
\[
    \int_0^L\psi(r(A))\,\dd A
    \le
    \int_0^L\psi(r^*(A))\,\dd A.
\]
\end{Lemma}

\begin{proof}
It is enough to prove the claim for hinge functions $(x-t)_+$ and affine functions, since any convex function on the compact interval $I$ has the standard affine-plus-positive-hinges representation.  The affine part is unchanged because the total integrals are equal.  For a hinge,
\[
    \int_0^L(r(A)-t)_+\,\dd A
    =\sup_{0\le b\le L}\left(\int_0^b r(A)\,\dd A-bt\right),
\]
because $r$ is nonincreasing; the same identity holds for $r^*$.  The assumed prefix domination gives the required hinge inequality.
\end{proof}

\begin{Lemma}[Obstacle lemma]\label{lem:es-obstacle}
Let $\tau:[0,L]\to[y_0(\lambda),\bar\tau]$ be a nondecreasing step function satisfying \eqref{eq:es-obstacle}.  Put
\[
    M:=\int_0^L\frac{\dd s}{\tau(s)}.
\]
Then there are $c\in[0,c_{\max}]$ and $m\ge0$ such that
\begin{equation}\label{eq:es-obstacle-mass}
    M\le X_\lambda(c)+m,
\end{equation}
and
\begin{equation}\label{eq:es-tail-ineq}
    m\le e^{-c}(1-e^{-m\bar\tau}).
\end{equation}
\end{Lemma}

\begin{proof}
Write
\[
    p(A):=\Phi_\lambda(\tau(A)).
\]
Since $\tau$ is nondecreasing and $\Phi_\lambda$ is decreasing, $p$ is nonincreasing.  Moreover $p_0\le p(A)\le1$.  Subtract the minimal slope:
\[
    \widetilde B(A):=B(A)-p_0A,
    \qquad
    \widetilde h(A):=h(A)-p_0A.
\]
Then $\widetilde B\le\widetilde h$ and
\[
    r(A):=\widetilde B'(A)=p(A)-p_0\ge0
\]
is nonincreasing.  Let
\[
    R:=\widetilde B(L)=\int_0^L r(A)\,\dd A.
\]
The function
\[
    \widetilde h(A)=1-e^{-A}-p_0A
\]
is concave and satisfies $\widetilde h'(A)=e^{-A}-p_0$. Hence it increases on $[0,c_{\max}]$, where $c_{\max}=\log\bar\tau=-\log p_0$, and reaches its maximum there. Choose $c$ on this increasing branch such that
\[
    \widetilde h(c)=R.
\]
This is possible because $0\le R=\widetilde B(L)\le\widetilde h(L)$, so $\widetilde h(L)\ge0$, and $\widetilde h(L)\le\max_A\widetilde h(A)=\widetilde h(c_{\max})$. It also gives $c\le L$. Indeed, if $L\le c_{\max}$ this follows from the monotonicity of $\widetilde h$ on $[0,c_{\max}]$, and if $L\ge c_{\max}$ then $c\le c_{\max}\le L$.

Define
\[
    r^*(A):=
    \begin{cases}
        e^{-A}-p_0,&0\le A\le c,\\
        0,&c<A\le L.
    \end{cases}
\]
Then $r^*$ and $r$ have the same integral over $[0,L]$.  Moreover, $r^*$ majorizes $r$: for $t\le c$,
\[
    \int_0^t r(A)\,\dd A
    =\widetilde B(t)
    \le\widetilde h(t)
    =\int_0^t r^*(A)\,\dd A,
\]
and for $t\ge c$,
\[
    \int_0^t r(A)\,\dd A
    \le R
    =\int_0^t r^*(A)\,\dd A.
\]
Applying Lemma~\ref{lem:es-convex-majorization} to the convex function $z\mapsto\Psi_\lambda(p_0+z)$ gives
\[
    M
    =\int_0^L\Psi_\lambda(p_0+r(A))\,\dd A
    \le
    \int_0^L\Psi_\lambda(p_0+r^*(A))\,\dd A.
\]
On $[0,c]$, $p_0+r^*(A)=e^{-A}$; on $(c,L]$, $p_0+r^*(A)=p_0$.  Thus
\[
    M
    \le
    \int_0^c\Psi_\lambda(e^{-A})\,\dd A
    +(L-c)\Psi_\lambda(p_0)
    =X_\lambda(c)+\frac{L-c}{\bar\tau}.
\]
Put
\[
    m:=\frac{L-c}{\bar\tau}.
\]
This proves \eqref{eq:es-obstacle-mass}.

It remains to prove \eqref{eq:es-tail-ineq}.  From the choice of $c$ and $\widetilde B(L)\le\widetilde h(L)$,
\[
    \widetilde h(c)\le\widetilde h(L).
\]
Equivalently,
\[
    p_0(L-c)
    \le h(L)-h(c)
    =e^{-c}(1-e^{-(L-c)}).
\]
Since $p_0=1/\bar\tau$ and $L-c=m\bar\tau$, this is exactly \eqref{eq:es-tail-ineq}.
\end{proof}

The numerical verification below shows $X_\lambda(c_{\max})<1$. Since $X_\lambda$ is increasing, the denominator in the next expression is positive throughout $[0,c_{\max}]$.
For $0\le c\le c_{\max}$ define
\begin{equation}\label{eq:es-Gamma-def}
    \Gamma_\lambda(c):=
    \frac{1-X_\lambda(c)}{1-\exp(-(1-X_\lambda(c))\bar\tau)}
    -e^{-c}.
\end{equation}
The next lemma is the only numerical input for the expected-share bound.

\begin{Lemma}[Numerical verification at $\lambda=\lambdaShare$]\label{lem:es-numerical-verification}
For $\lambda=\lambdaShare$,
\begin{equation}\label{eq:es-Xmax-num}
    X_\lambda(c_{\max})<1,
\end{equation}
and
\begin{equation}\label{eq:es-Gamma-positive}
    \Gamma_\lambda(c)>0\qquad(0\le c\le c_{\max}).
\end{equation}
Numerically,
\[
    \bar\tau\approx2.054472530350634,
    \qquad
    c_{\max}\approx0.720019138772815,
\]
\[
    X_\lambda(c_{\max})\approx0.850921358504375<1,
\]
and an interval enclosure gives
\[
    \min_{0\le c\le c_{\max}}\Gamma_\lambda(c)>0.0018100052.
\]
The minimizing box is centered near
\[
    c\approx0.262356574573656,
    \qquad
    X_\lambda(c)\approx0.511699045199960.
\]
\end{Lemma}

\begin{proof}
The functions involved are explicit one-variable functions.  The stated values are obtained by interval arithmetic: enclose the root of $\Phi_\lambda(y)=1$, evaluate the integral \eqref{eq:es-X-def} with outward rounding, and minimize \eqref{eq:es-Gamma-def} on $[0,c_{\max}]$ by interval subdivision and interval-Newton verification.  Thus the displayed lower bound is verified by interval arithmetic for the explicit one-dimensional function $\Gamma_\lambda$.
\end{proof}

\begin{proof}[Proof of Theorem~\ref{thm:1.147}]
Assume a strict expected-share counterexample exists at $\lambda=\lambdaShare$.  If $f^*(\alpha_i,D_i)=0$ for some $i$, then $V_i=0$ almost surely and $U_i=0$, contradicting strictness.  By Lemma~\ref{lem:es-capacity-reduction}, the counterexample gives a capacity path $\tau$ satisfying \eqref{eq:es-obstacle} and having total mass $M=1$.  Apply Lemma~\ref{lem:es-obstacle}.  There exist $c\in[0,c_{\max}]$ and $m\ge0$ such that
\[
    1\le X_\lambda(c)+m,
    \qquad
    m\le e^{-c}(1-e^{-m\bar\tau}).
\]
Since $X_\lambda$ is increasing, Lemma~\ref{lem:es-numerical-verification} gives $X_\lambda(c)\le X_\lambda(c_{\max})<1$.  Hence
\[
    1-X_\lambda(c)\le m.
\]
For fixed $c$, the function
\[
    t\mapsto e^{-c}(1-e^{-t\bar\tau})-t
\]
is concave, is zero at $t=0$, and is nonnegative at $t=m$.  It is therefore nonnegative at every $0\le t\le m$, in particular at $t=1-X_\lambda(c)$.  Thus
\[
    1-X_\lambda(c)
    \le
    e^{-c}\left(1-e^{-(1-X_\lambda(c))\bar\tau}\right).
\]
This is equivalent to $\Gamma_\lambda(c)\le0$, contradicting Lemma~\ref{lem:es-numerical-verification}.  Therefore no counterexample exists at $\lambda=\lambdaShare$, and Theorem~\ref{thm:1.147} follows.
\end{proof}

\paragraph{Calculus facts for the expected-share completion.}
The function $\Phi_\lambda$ is strictly decreasing on $(0,\infty)$.  Moreover,
\[
    \Psi_\lambda(p)=\frac1{\Phi_\lambda^{-1}(p)}
\]
is convex on $[p_0,1]$.  Indeed, with $p=\Phi_\lambda(y)$,
\[
    \Psi_\lambda''(p)
    =
    -\frac{y^3e^{2y}}
    {\lambda^2(y-e^y+1)^3}>0,
\]
because $y-e^y+1<0$ for $y>0$.

The numerical verification may equivalently be evaluated in the $y$-parameter.  If $e^{-c}=\Phi_\lambda(y)$, then
\[
    X_\lambda(c)
    =
    \int_{y_0(\lambda)}^y
    \left(
        \frac1{t^2}-\frac1{t(e^t-1)}
    \right)\,\dd t.
\]
This is the same capacity curve as in \eqref{eq:cap-x-y-def}, written in the residual-capacity parameter $c$.

\end{document}